\newcommand{\myparskip}{3pt}
\newcommand{\headers}[3]{
\newpage\setcounter{page}{1}
\def\@oddhead{$\underline{\hbox to\textwidth{%
\textbf{\rlap{#1}\phantom{hj}\hfill #2 \hfill \llap{#3}}}}$}
\def\@oddfoot{\hfill\thepage\hfill}}
\newtheorem*{theorem*}{Theorem}
\newtheorem{lemma}{Lemma}[section]
\newtheorem{theorem}[lemma]{Theorem}
\newtheorem{remark}[lemma]{Remark}
\renewenvironment{proof}{\vspace{-0.1in}\noindent{\bf Proof:}}%
        {\hspace*{\fill}$\Box$\par}
        {\hspace*{\fill}$\Box$\par}
        {\hspace*{\fill}$\Box$\par}
\def\eps{\varepsilon}
\def\ceil#1{\lceil {#1} \rceil}
\def\script#1{\mathcal{#1}}
\def\opt{\textsc{OPT}}
\def\etal{\text{et al.}\xspace}
\def\mM{\script{M}}
\newcommand{\E}{\mathrm{E}}
 \newcommand{\cU}{\mathcal{U}}
\newcommand{\cS}{\mathcal{S}} \newcommand{\setcover}{{\sc Set
    Cover}\xspace}
\newcommand{\psc}{{\sc Partial-SC}\xspace}
\newcommand{\maxcover}{{\sc Max $k$-Cover}\xspace}
\newcommand{\maxbcover}{{\sc Max-Budgeted-Cover}\xspace}
\newcommand{\vc}{{\sc VC}\xspace}
\newcommand{\partitionsc}{{\sc Partition-SC}\xspace}
\newcommand{\cip}{{\sc CIP}\xspace}
\newcommand{\cips}{{\sc CIPs}\xspace}
\newcommand{\submodsc}{{\sc Submodular Set Cover}\xspace}
\newcommand{\multisubmodsc}{{\sc MP-Submod-SC}\xspace}
\newcommand{\ccf}{{\sc CCF}\xspace}
\newcommand{\sclp}{{\sc SC-LP}\xspace}
\newcommand{\psclp}{{\sc PSC-LP}\xspace}
\newcommand{\mbclp}{{\sc MBC-LP}\xspace}
\title{On Approximating Partial Set Cover and Generalizations}
\author{
  Chandra Chekuri\thanks{Dept.\ of Computer Science, University of
    Illinois, Urbana-Champaign, IL, 61820. {\tt chekuri@illinois.edu}.
    Work on this paper partially supported by NSF grant CCF-1526799.  }
  \and
  Kent Quanrud\thanks{Dept.\ of Computer Science, University of
    Illinois, Urbana-Champaign, IL, 61820. {\tt quanrud2@illinois.edu}.
    Work on this paper partially supported by NSF grant CCF-1526799.  }
  \and
  Zhao Zhang\thanks{College of Mathematics and Computer Science,
    Zhejiang Normal University, China.
     {\tt zhaozhang@zjnu.edu.cn}. Work on this paper partially supported by NSFC (11771013, 61751303, 11531011) and ZJ-NSFC (LD19A010001), and done while the author was visiting University of Illinois. }
}
\begin{document}
\maketitle

\begin{abstract}
  Partial Set Cover (\psc) is a generalization of the well-studied Set
  Cover problem (\setcover). In \psc the input consists of an integer
  $k$ and a set system $(\cU,\cS)$ where $\cU$ is a finite set, and
  $\cS \subseteq 2^{\cU}$ is a collection of subsets of $\cU$. The
  goal is to find a subcollection $\cS' \subseteq \cS$ of smallest
  cardinality such that sets in $\cS'$ cover at least $k$ elements of
  $\cU$; that is $|\cup_{A \in \cS'} A| \ge k$. \setcover is a special
  case of \psc when $k = |\cU|$.  In the weighted version each set
  $S \in \cS$ has a non-negative weight $w(S)$ and the goal is to find
  a minimum weight subcollection to cover $k$ elements. Approximation
  algorithms for \setcover have been adapted to obtain comparable
  algorithms for \psc in various interesting cases.  In recent work
  Inamdar and Varadarajan \cite{IV1}, motivated by geometric set
  systems, obtained a simple and elegant approach to reduce \psc to
  \setcover via the natural LP relaxation. They showed that if a
  deletion-closed family of \setcover admits a $\beta$-approximation
  via the natural LP relaxation, then one can obtain a
  $2(\beta + 1)$-approximation for \psc on the same family. In a
  subsequent paper \cite{IV2}, they also considered a generalization of
  \psc that has multiple partial covering constraints which is partly
  inspired by and generalizes previous work of Bera \etal
  \cite{BeraGKR} on the Vertex Cover problem.

  Our main goal in this paper is to demonstrate some useful
  connections between the results in \cite{BeraGKR,IV1,IV2} and
  submodularity. This allows us to simplify, and in some cases improve
  their results. We improve the approximation for \psc to
  $(1-1/e)(\beta + 1)$ in the same setting as that in \cite{IV1}.  We
  extend the results in \cite{BeraGKR,IV2} to the sparse setting.
\end{abstract}

\section{Introduction}
\label{sec:intro}
\setcover is a well-studied problem in combinatorial optimization. The
input is a set system $(\cU,\cS)$ consisting of a finite set $\cU$ and
a collection $\cS = \{S_1,S_2,\ldots,S_m\}$ of subsets of $\cU$. The
goal is to find a minimum cardinality subcollection
$\cS' \subseteq \cS$ such that $\cU$ is covered by sets in $\cS'$.  In
the weighted version each $S_i$ has a weight $w_i \ge 0$ and the goal
is to find a minimum weight subcollection of sets whose union is
$\cU$.  \setcover is NP-Hard and approximation algorithms have been
extensively studied. A very simple greedy algorithm yields an
$H_d \le 1 + \ln d$ approximation where $d = \max_i |S|_i$ and this
holds even in the weighted case. Moreover this bound is essentially
tight unless $P = NP$ \cite{Feige}. Various special cases of \setcover
have been studied in the literature. A well-known example is the
Vertex Cover problem in graphs (\vc) which can be viewed as a special
case of \setcover where the frequency of each element is at most $2$
(the frequency of an element is the number of sets it is contained
in). When the maximum frequency of is $f$, an
$f$-approximation can be obtained.  Interesting class of \setcover
instances come from various geometric range spaces in low
dimensions. A canonical example here is the problem of covering points
in the plane by a given collection of disks. This problem admits a
constant factor approximation in the weighted case \cite{ChanGKS} via
a natural LP, and a PTAS in the unweighted case \cite{MR} via local
search; there is also a QPTAS for the weighted case
\cite{MRR}. Closely related to \setcover are maximization variants,
namely, \maxcover and \maxbcover.  In \maxcover we are given a set
system $(\cU,\cS)$ and an integer $k$; the goal is to pick $k$ sets
from $\cS$ to maximize the size of their union. In \maxbcover the sets
have weights and the goal is to pick a collections of sets with total
weight at most a given budget $B$ so as to maximize the size of their
union.  $(1-1/e)$-approximations are known for both these problems
\cite{NWF,KMN,Svir} and there are tight unless $P=NP$ \cite{Feige}.

\paragraph{Partial Set Cover (\psc):} In \psc the input, in addition
to the set system as in \setcover, also has an integer parameter $k$,
and now the goal is to find a minimum (weight) subcollection of the
given sets whose union is of size at least $k$. Note that \setcover is
a special case when $k = |\cU|$. It is natural to ask if \psc can be
approximated (almost) as well as \setcover. In several settings this is
indeed the case. For instance the greedy algorithm gives the
same guarantee for \psc as it does for \setcover; one can see this
transparently by viewing \setcover and \psc as special case of the
\submodsc problem for which greedy has been analyzed by Wolsey
\cite{wolsey}. However, for special cases of \setcover such as \vc one needs
more careful analysis to obtain comparable bounds for \psc; we
refer the reader to \cite{KPS} and references therein.
Of particular interest to us is the recent result of Inamdar and
Varadarajan \cite{IV1} which gave a simple and intuitive reduction
from \psc to \setcover via the natural LP relaxation.  Their black box
reduction to \setcover is particularly useful in geometric settings.
Inamdar and Varadarajan show that if there is a $\beta$-approximation
for a deletion-closed class of \setcover instances\footnote{We say tha
  a family of set systems is \emph{deletion closed} if removing an
  element or removing a set from a set system in the family yields
  another set system in the same family.} via the standard LP, then
there is $2(\beta + 1)$ approximation for \psc on the same family,via
a standard LP relaxation.

In a subsequent paper Inamdar and Varadarajan \cite{IV2} considered a
generalization of \psc when there are multiple partial covering
constraints. They call their problem the Partition Set Cover problem
(\partitionsc) and were motivated by previous work of Bera \etal
\cite{BeraGKR} who considered the same problem in the special setting
of \vc. In this problem the input is a set system $(\cU, \cS)$, and
$r$ subsets $\cU_1,\cU_2,\ldots,\cU_r$ of $\cU$, and
$r$ integers $k_1,\ldots,k_r$. The goal is to find a minimum
cardinality subcollection $\cS' \subseteq \cS$ (or a minimum weight
subcollection in the weighted case) such that, for $1 \le i \le r$,
the number of elements covered by $\cS'$ from $\cU_i$ is at least
$k_i$. For deletion-closed set families that admit a
$\beta$-approximation for \setcover, \cite{IV2} obtained an
$O(\beta + \log r)$ approximation for \partitionsc and this
generalizes the results of \cite{BeraGKR,IV1}. In
\cite{hk18} the authors describe a primal-dual algorithm
that yields an $(fH_r + H_r)$-approximation for \partitionsc where
$f$ is the maximum frequency; note that \cite{IV2} implies
a ratio of $O(f + \log r)$ for the same problem
where the asymptotic notation hides a constant factor.

\paragraph{Submodular Set Cover and Related Problems:} As we remarked
\setcover and \psc are special cases of \submodsc. Given a finite
ground set $N$ a real-valued set function
$f:2^N \rightarrow \mathbb{R}$ is \emph{submodular} iff
$f(A\cup B) + f(A \cap B) \le f(A) + f(B)$ for all $A, B \subseteq
N$. A set function is monotone if $f(A) \le f(B)$ for all
$A \subset B$. We will be mainly interested here in monotone
submodular functions that are normalized, that is $f(\emptyset) = 0$,
and hence are also non-negative. A polymatroid is an integer valued
normalized monotone submodular function.  In \submodsc we are given
$N$, a non-negative weight function $w: N \rightarrow \mathbb{R}_+$,
and a polymatroid $f:2^N \rightarrow \mathbb{Z}_+$ via a value oracle.
The goal is to solve $\min_{S \subseteq N} w(S)$ such that
$f(S) = f(N)$. \setcover and \psc can be seen as special case of
\submodsc as follows. Given a set system $(\cU, \cS)$ let $N = [m]$
where $m = |\cS|$. Define the coverage function
$f:2^N \rightarrow \mathbb{R}_+$ as: $f(A) = |\cup_{i \in A} S_i|$. It
is well-known and easy to show that $f$ is a polymatroid.  Thus
\setcover can be reduced to \submodsc via the coverage function.  To
reduce \psc to \submodsc we let
$f(A) = \min\{k, |\cup_{i \in A} S_i|\}$ which we refer to as the
truncated coverage function.  Wolsey showed that a simple
greedy algorithm yields a $1 + \ln d$ approximation for \submodsc when
$f$ is a polymatroid where $d = \max_{i \in N} f(i)$.

Har-Peled and Jones \cite{HJ18}, motivated by an application from
computational geometry, implicitly considered the following
generalization of \submodsc. We have a ground set $N$ and a weight
function $w: N \rightarrow \mathbb{R}_+$ as before. Instead of one
polymatroid we are given $r$ polymatroids $f_1,f_2,\ldots,f_r$ over
$N$ and integers $k_1,k_2,\ldots,k_r$. The goal is to find
$S \subseteq N$ of minimum weight such that $f_i(S) \ge k_i$ for
$1 \le i \le r$. We refer to this as \multisubmodsc. As noted in
\cite{HJ18}, it is not hard to reduce \multisubmodsc to \submodsc. We
simply define a new function $g$ where
$g(A) = \sum_{i=1}^r \min\{k_i, f_i(A)\}$. Via Wolsey's result for
\submodsc this implies an $O(\log r + \log K)$ approximation via the
greedy algorithm where $K = \sum_{j=1}^r k_j$.
Although \multisubmodsc can be reduced to \submodsc it is useful
to treat it separately when the functions $f_i$ are not general
submodular functions, as is the case in \partitionsc.

We mention that prior work has considered multiple submodular
objectives from a \emph{maximization} perspective \cite{CVZ,CJV}
rather than from a minimum cost perspective. There are useful
connections between these two perspectives. Consider \submodsc. We
could recast the exact version of this problem as $\max f(S)$ subject
to the constraint $w(S) \le B$ where $B$ is a budget.
This is submodular function maximization subject to a knapsack
constraint and admits a $(1-1/e)$-approximation \cite{Svir}.

\paragraph{Covering Integer Programs (\cips):} A \cip is an integer
program of the form
$\max\{ wx \mid Ax \ge b, x \le d, x \in \mathbb{Z}^n_+\}$ where $A$
is a non-negative $m \times n$ matrix and $b \ge 0$.  \cip s
generalize \setcover and can be seen as a special case of \submodsc.
However, a direction reduction of \cip to \submodsc requires one to
scale the numbers and consequently the greedy algorithm does not yield
a good approximation ratio as a function of $m$ and $n$. This is
rectified via LP relaxations that employ knapsack-cover (KC)
inequalities first used in this context by Carr \etal \cite{CFLP}. Via
KC inequalities one obtains refined results for \cips that are
similar to those for \setcover modulo lower order terms. In particular
an $O(\log \Delta_0)$ approximation can be achieved where $\Delta_0$
is the maximum number of non-zeroes in any column of $A$.  We refer
the reader to \cite{KY,CHS,CQ19} for further results on \cips.

\subsection{Motivation and contributions}
Our initial motivation was to simplify and explain certain technical
aspects of the algorithm and analysis in \cite{IV1,IV2}.  We view \psc
and \partitionsc as special cases of \multisubmodsc and use the lens
of submodularity and bring in some known tools from this area.
This view point sheds light on the properties of the
coverage function that lead to stronger bounds than those possible for
general submodular functions. A second perspective we bring is from
the recent work on \cips \cite{CQ19} that shows the utility of a
simple randomized-rounding plus alteration approach to obtain
approximation ratios that depend on the sparsity.  Using these two
perspectives we obtain some improvements and generalizations of the
results in \cite{IV1,IV2}.

\begin{itemize}
\item For deletion-closed set systems that have a
  $\beta$-approximation to \setcover via the natural LP we obtain
  a $(1-1/e)(\beta + 1)$-approximation for \psc. This slightly
  improves the bound of \cite{IV1} from $2(\beta +1)$ while also
  simplifying the algorithm and analysis.
\item For \multisubmodsc we obtain a bicriteria approximation.
  We obtain a random solution $S$ such that $f_i(S) \ge (1-1/e-\eps) k_i$
  for $1 \le i \le r$ and the expected weight of $S$ is
  $O(\frac{1}{\eps}\log r) \opt$. We obtain the same bound even in a more general
  setting where the system of constraints is $r$-sparse. We describe
  an application of the bicriteria approximation to splitting point
  sets that was considered in \cite{HJ18}.
\item We consider a simultaneous generalization of \partitionsc and
  \cips and obtain a randomized $O(\beta + \log r)$ approximation
  where $r$ is the sparsity of the system. This generalizes the
  result of \cite{IV2} to the sparse setting.
\end{itemize}

We hope that some of the ideas here are useful in extending work on
\psc and generalizations to other special cases of submodular functions.

\section{Background}
\label{sec:prelim}
\setcover and \psc have natural LP relaxations and they are closely
related to those for \maxcover and \maxbcover. The LP relaxation
for \setcover (\sclp) is shown in Fig~\ref{fig:sclp}. It has
a variable $x_i$ for each set $S_i \in \cS$, which, in the integer
programming formulation, indicates whether $S_i$ is picked in the
solution. The goal is to minimize the weight of the chosen sets
which is captured by the objective $\sum_{S_i \in \cS} w_i x_i$
subject to the constraint that each element $e_j$ is covered.
The LP relaxation for \psc (\psclp) is shown in Fig~\ref{fig:psclp}.
Now we need additional variables to indicate which of the $k$
elements are going to be covered; for each $e_j \in \cU$ we thus
have a variable $z_j$ for this purpose. In \psclp it is important
to constrain $z_j$ to be at most $1$. The constraint $\sum_{e_j} z_j
\ge k$ forces at least $k$ elements to be covered fractionally.

\begin{figure}[htb]
  \begin{subfigure}{0.5\linewidth}
  \begin{center}
    \begin{tcolorbox}[width=3in]
      (SC-LP)
      \begin{align*}
        \min \sum_{S_i \in \cS} w_i x_i & \\
        \sum_{i: e_j \in S_i} x_i & \ge 1 \quad e_j \in \cU\\
        x_i & \ge 0 \quad \quad S_i \in \cS
      \end{align*}
    \end{tcolorbox}
  \end{center}
  \caption{LP relaxation for \setcover. }
  \label{fig:sclp}
  \end{subfigure}
  \begin{subfigure}{0.5\linewidth}
 \begin{center}
   \begin{tcolorbox}[width=3in]
     (PSC-LP)
      \begin{align*}
        \min \sum_{S_i \in \cS} w_i x_i & \\
        \sum_{i: e_j \in S_i} x_i & \ge z_j \quad e_j \in \cU\\
        \sum_{e_j \in \cU} z_j & \ge k \\
        z_j & \in [0,1] \quad e_j \in \cU \\
        x_i & \ge 0 \quad \quad S_i \in \cS
      \end{align*}
    \end{tcolorbox}
  \end{center}
  \caption{LP relaxation for \psc. }
  \label{fig:psclp}
  \end{subfigure}
\end{figure}

As noted in prior work the integrality gap of \psclp can be made
arbitrarily large but it is easy to fix by guessing the largest
cost set in an optimum solution and doing some preprocessing.
We discuss this issue in later sections.

Figs~\ref{fig:mclp} and \ref{fig:mbclp} show LP relaxations
for \maxcover and \maxbcover respectively. In these problems
we maximize the number of elements covered subject to an upper bound
on the number of sets or on the total weight of the chosen sets.

\begin{figure}[htb]
  \begin{subfigure}{0.5\textwidth}
      \begin{center}
        \begin{tcolorbox}[width=3in]
          (MC-LP)
          \begin{align*}
            \max \sum_{e_j \in \cU} z_j & \\
            \sum_{i: e_j \in S_i} x_i & \ge z_j \quad e_j \in \cU\\
            \sum_{S_i \in \cS} x_i & \le k \\
            z_j & \in [0,1] \quad e_j \in \cU \\
            x_i & \ge 0 \quad \quad S_i \in \cS
          \end{align*}
        \end{tcolorbox}
      \end{center}
      \caption{LP relaxation for \maxcover. }
      \label{fig:mclp}
    \end{subfigure}
    \begin{subfigure}{0.5\textwidth}
      \begin{center}
        \begin{tcolorbox}[width=3in]
          (MBC-LP)
          \begin{align*}
            \max \sum_{e_j \in \cU} z_j & \\
            \sum_{i: e_j \in S_i} x_i & \ge z_j \quad e_j \in \cU\\
            \sum_{S_i \in \cS} w_i x_i & \le B \\
            z_j & \in [0,1] \quad e_j \in \cU \\
            x_i & \ge 0 \quad \quad S_i \in \cS
          \end{align*}
        \end{tcolorbox}
      \end{center}
      \caption{LP relaxation for \maxbcover. }
      \label{fig:mbclp}
    \end{subfigure}
\end{figure}

\paragraph{Greedy algorithm:} The greedy algorithm is a well-known and
standard algorithm for the problems studied here. The algorithm
iteratively picks the set with the current maximum bang-per-buck ratio
and add it to the current solution until some stopping condition is
met. The bang-per-buck of a set $S_i$ is defined as
$|S_i \cap \cU'|/w_i$ where $\cU'$ is the set of uncovered elements at
that point in the algorithm. For minimization problems such as
\setcover and \psc the algorithm is stopped when the required number
of elements are covered. For \maxcover and \maxbcover the algorithm is
stopped when if adding the current set would exceed the budget. Since
this is a standard algorithm that is extremely well-studied we do not
describe all the formal details and the known results. Typically the
approximation guarantee of Greedy is analyzed with respect to an
optimum integer solution. We need to compare it to the value of the
fractional solution. For the setting of the cardinality constraint
this was already done in \cite{NWF}.  We need a slight generalization
to the budgeted setting and we give a proof for the sake of
completeness.

\begin{lemma}
  \label{lem:greedy}
  Let $Z$ be the optimum value of (\mbclp) for a given instance of
  \maxbcover with budget $B$.
  \begin{itemize}
  \item  Suppose Greedy algorithm is run
    until the total weight of the chosen sets is
    equal to or exceeds $B$. Then the number of elements covered by
    greedy is at least $(1-1/e)Z$.
  \item Suppose no set covers more than $c Z$ elements for some $c > 0$
    then the weight of sets chosen by Greedy to cover $(1-1/e)Z$
    elements is at most $(1+ec)B$.
  \end{itemize}
  These conclusions holds even for the weighted coverage problem.
\end{lemma}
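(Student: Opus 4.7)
The plan is to run the standard potential argument for greedy, comparing each step against the LP optimum $Z$ rather than an integer optimum. Write the weighted coverage function as $f$ and let $S_{i_1},S_{i_2},\ldots$ be the sets picked by greedy in order, with individual weights $w_{i_1},w_{i_2},\ldots$. Let $c_t = f(\{S_{i_1},\ldots,S_{i_t}\})$ denote the (element-weight) coverage after $t$ iterations, so $c_0 = 0$. The central claim I would establish is the recursive inequality
\[
Z - c_t \;\le\; \Bigl(1 - \tfrac{w_{i_t}}{B}\Bigr)(Z - c_{t-1}) \quad \text{for all } t\ge 1.
\]

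To prove this inequality, fix an optimal fractional solution $(x^*,z^*)$ to (\mbclp) of value $Z$, and let $A_{t-1}$ be the set of elements already covered by $\{S_{i_1},\ldots,S_{i_{t-1}}\}$. Since each $z_j^* \le 1$, the contribution of elements in $A_{t-1}$ to the objective is at most $c_{t-1}$, so $x^*$ restricted to the residual problem on $\cU \setminus A_{t-1}$ is still feasible and has objective value at least $Z - c_{t-1}$, while its total weight is at most $B$. An averaging argument then yields some set whose marginal coverage per unit cost is at least $(Z-c_{t-1})/B$. Greedy picks a set at least as good, so $(c_t - c_{t-1})/w_{i_t} \ge (Z-c_{t-1})/B$, which rearranges into the claimed recursion. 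Iterating gives
\[
Z - c_t \;\le\; \prod_{s=1}^{t} \Bigl(1 - \tfrac{w_{i_s}}{B}\Bigr)\, Z \;\le\; \exp\!\Bigl(-\tfrac{1}{B}\textstyle\sum_{s=1}^{t} w_{i_s}\Bigr)\, Z.
\]

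For the first bullet, let $T$ be the first step at which $\sum_{s\le T} w_{i_s} \ge B$. The displayed bound then gives $Z - c_T \le Z/e$, so greedy covers at least $(1-1/e)Z$, as required. For the second bullet, let $T$ be the first step at which $c_T \ge (1-1/e)Z$; then $Z - c_{T-1} > Z/e$, and the same displayed bound forces $\sum_{s<T} w_{i_s} < B$. It remains to bound $w_{i_T}$. Rearranging the greedy inequality at step $T$ gives $w_{i_T} \le B(c_T - c_{T-1})/(Z - c_{T-1}) < eB(c_T - c_{T-1})/Z$, and since no single set covers more than $cZ$ (element-weight), $c_T - c_{T-1} \le cZ$, yielding $w_{i_T} < ecB$. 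Adding up, $\sum_{s\le T} w_{i_s} < (1+ec)B$.

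The only real subtlety is the step that extracts a set of marginal density at least $(Z-c_{t-1})/B$ from the LP optimum; the rest is bookkeeping. The argument is clean precisely because the constraint $z_j \le 1$ in (\mbclp) lets us charge the already-covered elements in the LP objective against $c_{t-1}$ rather than something potentially much larger. For the weighted coverage version, the same proof goes through verbatim after replacing cardinalities by element weights throughout, since the residual-LP and averaging arguments never use that element weights are $1$.
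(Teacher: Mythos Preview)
Your argument is correct and follows essentially the same route as the paper's proof: establish that at every step the greedy pick has marginal density at least $(\text{residual LP value})/B$ (your averaging argument is exactly the ``simple algebra'' the paper alludes to), iterate this to get the $(1-1/e)Z$ coverage bound once the weight reaches $B$, and for the second bullet bound the weight of the final set via the same density inequality combined with the hypothesis that no set covers more than $cZ$. The paper frames the second part by first looking at the moment weight exceeds $B$ rather than the moment coverage reaches $(1-1/e)Z$, but the two viewpoints coincide and the arithmetic is identical.
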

\begin{proof}
  We give a short sketch. Greedy's analysis for \maxbcover is based on
  the following key observation. Consider the first set $S$ picked by
  Greedy.  Then $|S|/w(S) \ge \opt/B$ where $\opt$ is the value of an
  optimum integer solution. And this follows from submodularity of the
  coverage function. This observation is applied iteratively with the
  residual solution as sets are picked and a standard analysis shows
  that when Greedy first meets or exceeds the budget $B$ then the
  total number of elements covered is at least $(1-1/e)\opt$.  We
  claim that we can replace $\opt$ in the analysis by $Z$.  Given a
  fractional solution $x,z$ we see that
  $Z = \sum_e z_e \le \sum_{e \in \cU} \min\{1,\sum_{i: e \in S_i}
  x_i\}$. Moreover $\sum_i w_i x_i \le B$. Via simple algebra, we can
  obtain a contradiction if $|S_i|/w_i < Z/B$ holds for all sets
  $S_i$.  Once we have this property the rest of the analysis is very
  similar to the standard one where $\opt$ is replaced by $Z$.

  Now consider the case when no set covers more than $cZ$ elements.
  If Greedy covers $(1-1/e)Z$ elements before the weight of sets
  chosen exceeds $B$ then there is nothing to prove. Otherwise let
  $S_j$ be the set added by Greedy when its weight exceeds $B$ for the
  first time. Let $\alpha \le |S_j|$ be the number of new elements
  covered by the inclusion of $S_j$. Since Greedy had covered less
  than $(1-1/e)Z$ elements the value of the residual fractional
  solution is at least $Z/e$. From the same argument as the in
  the preceding paragraph, since Greedy chose $S_j$ at that point,
  $\frac{\alpha}{w(S_j)} \ge \frac{Z}{eB}$. This implies that
  $w(S_j) \le e B \frac{\alpha}{Z} \le e c B$. Since Greedy covers
  at least $(1-1/e)Z$ elements after choosing $S_j$ (follows from
  the first claim of the lemma), the total weight of the sets
  chosen by Greedy is at most $B + w(S_j) \le (1+ec)B$.
\end{proof}

\subsection{Submodular set functions and continuous extensions}

Continuous extensions of submodular set functions have played
an important role in algorithmic and structural aspects.
The idea is to extend a discrete set function $f:2^N \rightarrow
\mathbb{R}$ to the continous space $[0,1]^N$.
Here we are mainly concerned with extensions motivated by
maximization problems, and confine our attention to two
extensions and refer the interested reader to
\cite{CCPV07,Vondrak-thesis} for a more detailed discussion.

The \emph{multilinear extension} of a real-valued
set function $f: 2^N \rightarrow \mathbb{R}$,
denoted by $F$, is defined as follows: For $x \in [0,1]^N$
$$F(x) = \sum_{S \subseteq N} f(S) \prod_{i \in S} x_i \prod_{j \not
  \in S} (1-x_j).$$
Equivalently $F(x) = \E[f(R)]$ where $R$ is a random set obtained
by picking each $i \in N$ independently with probability $x_i$.

The \emph{concave closure} of a real-valued set function
$f: 2^N \rightarrow \mathbb{R}$, denoted by $f^+$, is defined as
the optimum of an exponential sized linear program:
$$f^+(x) = \max \sum_{S \subseteq N} f(S) \alpha_S \quad \text{~s.t~} \sum_{S \ni
  i} \alpha_S = x_i \quad \forall i \in N \text{~and~} \alpha_S \ge 0
\quad \forall S.$$

A special case of submodular functions are non-negative weighted sums
of rank functions of matroids. More formally suppose $N$ is a finite
ground set and $\mM_1,\mM_2,\ldots,\mM_\ell$ are $\ell$ matroids on
the same ground set $N$. Let $g_1, \ldots,g_\ell$ be the rank
functions of the matroids and these are monotone submodular. Suppose
$f = \sum_{h=1}^\ell w_h g_h$ where $w_h \ge 0$ for all
$h \in [\ell]$, then $f$ is monotone submodular. We note that
(weighted) coverage functions belongs to this class.
For a such a submodular function we can consider an
extension $\tilde{f}$ where $\tilde{f}(x) = \sum_h w_h g^{+}(x)$.
We capture two useful facts which are shown in \cite{CCPV07}.

\begin{lemma}[\cite{CCPV07}]
  \label{lem:extensions}
  Suppose $f = \sum_{h=1}^\ell w_h g_h$ is the weighted sum of rank
  functions of matroids. Then $F(x) \ge (1-1/e)\tilde{f}(x)$.
  Assuming oracle access to the rank functions $g_1,\ldots,g_\ell$,
  for any $x \in [0,1]^N$, there is a polynomial-time solvable LP
  whose optimum value is $\tilde{f}(x)$.
\end{lemma}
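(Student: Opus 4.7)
The plan is to split the lemma into its two assertions and reduce each to the case of a single matroid rank function via the linearity of the multilinear extension and the additive form of $\tilde{f}$, and then to appeal to known results from \cite{CCPV07}.

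For the bound $F(x) \ge (1 - 1/e)\tilde{f}(x)$, first I would observe that the multilinear extension depends linearly on the underlying set function, so writing $f = \sum_{h=1}^\ell w_h g_h$ gives $F(x) = \sum_{h} w_h G_h(x)$, where $G_h$ is the multilinear extension of $g_h$. Since $\tilde{f}(x) = \sum_{h} w_h g_h^+(x)$ and each $w_h \ge 0$, it suffices to prove the matroid-by-matroid bound $G_h(x) \ge (1 - 1/e)\, g_h^+(x)$. This is the correlation gap for matroid rank functions established in \cite{CCPV07}: for any fractional decomposition $\{\alpha_S\}$ of $x$ achieving $g_h^+(x)$ and a random set $R$ drawn by independent Bernoulli sampling with marginals $x_i$, one has $\mathrm{E}[g_h(R)] \ge (1 - 1/e)\sum_S \alpha_S g_h(S)$. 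The proof exploits the matroid exchange axiom to lower bound $\mathrm{E}[g_h(R)]$ by the expectation of a sum of independent Bernoullis coupled to the decomposition, and then invokes the standard inequality $1 - (1 - 1/n)^n \ge 1 - 1/e$.

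For polynomial-time computability of $\tilde{f}(x)$, I would again reduce to single matroids and use the reformulation
\[
  g_h^+(x) \;=\; \max\bigl\{ \mathbf{1}^\top y : y \le x,\; y \in P_I(M_h) \bigr\},
\]
where $P_I(M_h) = \{ y \ge 0 : y(S) \le g_h(S)\ \forall S \subseteq N\}$ is the independence polytope of $M_h$. The $\le$ direction replaces each $S$ with $\alpha_S > 0$ in an optimal decomposition by a maximum independent subset $I_S \subseteq S$ and sets $y_i = \sum_{S : i \in I_S} \alpha_S$; the $\ge$ direction decomposes $y$ into a convex combination of independent-set indicators and transports this to a feasible $\alpha$. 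Combining over $h$,
\[
  \tilde{f}(x) \;=\; \max\Bigl\{ \sum_{h=1}^\ell w_h\, \mathbf{1}^\top y^h : y^h \le x,\; y^h \in P_I(M_h) \text{ for all } h\Bigr\}
\]
is an LP with polynomially many variables. Although each $P_I(M_h)$ has exponentially many facets, the matroid rank oracle supplies a polynomial-time separation oracle, so the LP is solvable in polynomial time via the ellipsoid method.

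The main obstacle is the first assertion: the linearity reduction and the matroid-polytope description of $g_h^+$ are essentially bookkeeping arguments, but the $(1 - 1/e)$ correlation gap for a matroid rank function genuinely requires the matroid exchange axiom together with the pipage/continuous-greedy analysis of \cite{CCPV07}. The $(1 - 1/e)$ factor is already tight for rank-one matroids (coverage-type functions), so the matroid-by-matroid reduction loses nothing, and no purely syntactic argument can improve the constant.
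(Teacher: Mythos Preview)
The paper does not give its own proof of this lemma: it is stated as a known fact from \cite{CCPV07} (``We capture two useful facts which are shown in \cite{CCPV07}'') and no argument is supplied. Your proposal is therefore not to be compared against a proof in the paper but against the cited source, and in that respect it is essentially correct: the linearity of the multilinear extension and the additive definition of $\tilde{f}$ reduce both assertions to the single-matroid case, where the $(1-1/e)$ bound is the correlation-gap result of \cite{CCPV07} and the LP formulation of $g_h^+$ via the matroid independence polytope (with separation by the rank oracle) is standard.
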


\begin{remark}
  Let $f: 2^{\cS} \rightarrow \mathbb{Z}_+$ be the coverage function
  associated with a set system $(\cU, \cS)$. Then
  $\tilde{f}(x) = \sum_{e \in \cU} \min\{1, \sum_{i: e \in S_i} x_i\}$
  where $\tilde f = \sum_{e \in \cU} g_e^+$ and $g^{+}_e(x) =  \min\{1,
  \sum_{i: e \in S_i} x_i\}$ is the rank function of a simple uniform
  matroid. One can see \psclp in a more compact fashion:
  $$\min \sum_i w_i x_i \text{~~s.t~~} \tilde{f}(x) \ge k.$$
\end{remark}

\paragraph{Concentration under randomized rounding:}
Recall the multilinear extension $F$ of a submodular
function $f$. If $x \in [0,1]^N$ then $F(x) = \E[f(R)]$ where
$R$ is a random set obtained by independently including
each $i \in N$ in $R$ with probability $x_i$.
We can ask whether $f(R)$ is concentrated around $\E[f(R)] = F(x)$.
And indeed this is the case when $f$ is Lipscitz.
For a parameter $c \ge 0$, $f$ is $c$-Lipschitz if
$|f_A(i)| \le c$ for all $i \in N$ and $A \subset N$; for
monotone functions this is equivalent to
the condition that $f(i) \le c$ for all $i \in N$.

\begin{lemma}[\cite{submod-chernoff}]
  \label{lem:submod-concentration}
  Let $f:2^N\rightarrow \mathbb{R}_+$ be a $1$-Lipschitz monotone
  submodular function. For $x \in [0,1]^N$ let $R$ be a random set
  drawn from the product distribution induced by $x$. Then for
  $\delta \ge 0$,
  \begin{itemize}
  \item $\Pr[f(R) \ge (1+\delta) F(x)] \le (\frac{e^\delta}{(1+\delta)^{(1+\delta)}})^{F(x)}$.
  \item $\Pr[f(R) \le (1-\delta) F(x)] \le e^{-\delta^2 F(x)/2}$.
  \end{itemize}
\end{lemma}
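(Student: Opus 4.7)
My plan is to follow the standard Chernoff exponential-moment method, with the product-of-MGFs step for independent summands replaced by a Doob martingale built one coordinate at a time over the ground set. Fix an ordering of $N = \{1, \ldots, n\}$, let $X_i = \mathbf{1}[i \in R]$ so that $X_1, \ldots, X_n$ are independent Bernoullis with means $x_i$, set $\mathcal{F}_i = \sigma(X_1, \ldots, X_i)$, and define the Doob martingale $Z_i = \E[f(R) \mid \mathcal{F}_i]$, with $Z_0 = F(x)$ and $Z_n = f(R)$. A short calculation shows the martingale differences have the explicit form $D_i := Z_i - Z_{i-1} = (X_i - x_i) V_i$, where $V_i$ is the $\mathcal{F}_{i-1}$-measurable expected marginal gain of adding $i$ given the history.

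The central structural step, and the only place where both hypotheses enter, is the almost-sure bound $0 \le V_i \le 1$: monotonicity gives $V_i \ge 0$, while submodularity combined with the $1$-Lipschitz assumption $f(\{j\}) \le 1$ gives $V_i \le 1$, since a conditional marginal gain is dominated by the unconditional one $f(\{i\}) - f(\emptyset)$. A direct two-point computation then bounds the conditional moment generating function by $\E[e^{t D_i} \mid \mathcal{F}_{i-1}] \le \exp\bigl((e^t - 1 - t)\, x_i V_i\bigr)$ for $t \ge 0$ and by the sub-Gaussian form $\exp(t^2 x_i V_i / 2)$ for the lower-tail calculation, matching the bounds one gets for a centered Bernoulli at scale $V_i$.

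Iterating these inequalities through the tower property, and then upgrading the resulting random total $\sum_i x_i V_i$ in the exponent to the deterministic quantity $F(x)$ using the almost-sure bound $V_i \le 1$ together with the submodular identity $\sum_i x_i \E[V_i] = \langle x, \nabla F(x) \rangle \le F(x)$ (which follows from concavity of the multilinear extension along nonnegative rays from the origin), yields the unconditional bounds $\E[e^{t(f(R) - F(x))}] \le \exp\bigl((e^t - 1 - t) F(x)\bigr)$ for $t \ge 0$ and $\exp(t^2 F(x) / 2)$ for $t \le 0$. Applying Markov's inequality to $e^{t f(R)}$ and $e^{-t f(R)}$ respectively, and optimizing at $t = \ln(1 + \delta)$ for the upper tail and at $t = \delta$ for the lower tail, recovers the two bullets in exactly the stated form.

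The main technical obstacle is precisely the upgrade from $\sum_i x_i V_i$ to $F(x)$ inside the exponent: the coordinatewise MGF bookkeeping by itself only yields a rate in terms of $\sum_i x_i$, which can be much larger than $F(x)$, and the final bound must be $F(x)$ to be useful. Controlling the random quantity $\sum_i x_i V_i$ and pushing through the deterministic upper bound $F(x)$ is the step that uses submodularity in an essential way beyond the pointwise bound $V_i \le 1$, and is what separates this Chernoff-type rate from the weaker Azuma/McDiarmid bound $e^{-\Theta(\delta^2 F(x)^2 / n)}$ that a generic bounded-differences argument would produce; everything else in the argument is routine Chernoff bookkeeping.
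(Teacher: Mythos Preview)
The paper does not prove this lemma; it is quoted from prior work (the reference \texttt{submod-chernoff}) and used as a black box, so there is no in-paper proof to compare against. Your outline is essentially the standard Doob-martingale argument from that reference, and the structure---revealing coordinates one at a time, writing $D_i=(X_i-x_i)V_i$ with $0\le V_i\le 1$, and bounding the conditional MGF by $\exp\bigl((e^t-1-t)\,x_iV_i\bigr)$---is correct.

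There is, however, a genuine gap at precisely the step you flag as the main obstacle. After iterating the conditional MGF bound you obtain
\[
\E\bigl[e^{t(f(R)-F(x))}\bigr]\;\le\;\E\Bigl[\exp\Bigl((e^t-1-t)\,\textstyle\sum_i x_iV_i\Bigr)\Bigr],
\]
and you must replace the \emph{random} quantity $\sum_i x_iV_i$ by the deterministic $F(x)$ inside the exponential. Because $\exp$ is convex, the expectation identity $\sum_i x_i\,\E[V_i]=\langle x,\nabla F(x)\rangle\le F(x)$ that you cite is not enough (Jensen goes the wrong way), and the pointwise bound $V_i\le 1$ only yields $\sum_i x_iV_i\le\sum_i x_i$, which can be much larger than $F(x)$. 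What is actually required is the \emph{almost-sure} inequality $\sum_i x_iV_i\le F(x)$. This is true, but it needs its own short argument: by induction on $|N|$, condition on $X_1$ and apply the hypothesis to the shifted function $g(S)=f(R_1\cup S)-f(R_1)$ on $\{2,\ldots,n\}$ to get $\sum_{i\ge 2}x_iV_i\le Z_1-f(R_1)$; then $x_1V_1+Z_1-f(R_1)=F(x)+X_1V_1-f(R_1)\le F(x)$, using $V_1\le f(\{1\})-f(\emptyset)$ from submodularity when $X_1=1$ and $f(\emptyset)\ge 0$ when $X_1=0$. Once this almost-sure bound is in hand, the rest of your plan goes through verbatim; but the justification you wrote (combining $V_i\le 1$ with the gradient inequality) does not by itself establish it.
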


\paragraph{Greedy algorithm under a knapsack constraint:}
Consider the problem of maximizing a monotone submodular function
subject to a knapsack constraint; formally
$\max f(S) \text{~s.t~} w(S) \le B$ where
$w: N \rightarrow \mathbb{R}_+$ is a non-negative weight function on
the elements of the ground set $N$. Note that when all $w(i) = 1$ and
$B = k$ this is the problem of maximizing a monotone submodular
function subject to a cardinality constraint. For the cardinality
constraint case, the simple Greedy algorithm that iteratively picks
the element with the largest marginal value yields a
$(1-1/e)$-approximation \cite{NWF}.  Greedy extends in a
natural fashion to the knapsack constraint setting; in each iteration
the element $i = \arg\max_{j} f_S(j)/w_j$ is chosen where $S$ is the
set of already chosen elements. Sviridenko \cite{Svir}, building on
earlier work on the coverage function \cite{KMN}, showed that Greedy
with some partial enumeration yields a $(1-1/e)$-approximation
for the knapsack constraint. The following lemma quantifies the
performance of the basic Greedy when it is stopped after meeting or
exceeding the budget $B$.

\begin{lemma}
  \label{lem:greedysubmod}
  Consider an instance of monotone submodular function maximization
  subject to a knapsack constraint.  Let $Z$ be the optimum value for
  the given knapsack budget $B$. Suppose the greedy algorithm is run
  until the total weight of the chosen sets is equal to or exceeds
  $B$. Letting $S$ be the greedy solution we have
  $f(S) \ge (1-1/e)Z$.
\end{lemma}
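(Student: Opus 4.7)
The plan is to follow the classical greedy analysis for submodular maximization, adapting the cardinality-constraint argument of Nemhauser–Wolsey–Fisher to the knapsack setting in the spirit of Lemma~\ref{lem:greedy}. Let $i_1, i_2, \ldots, i_t$ be the elements chosen by greedy in order, let $S_j = \{i_1, \ldots, i_j\}$, and let $w_j = w(i_j)$. Let $O$ be an optimum integer solution, so $f(O) = Z$ and $w(O) \le B$.

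The key step is to show that at each iteration $j$, the ratio of marginal value to weight of greedy's pick satisfies
\[
  \frac{f(S_j) - f(S_{j-1})}{w_j} \;\ge\; \frac{Z - f(S_{j-1})}{B}.
\]
To prove this, I would use submodularity to write $f(O \cup S_{j-1}) - f(S_{j-1}) \le \sum_{o \in O} f_{S_{j-1}}(o)$, and monotonicity to note $f(O \cup S_{j-1}) \ge Z$. An averaging argument over $O$, whose weight is at most $B$, then produces an element $o^\star \in O$ with $f_{S_{j-1}}(o^\star)/w(o^\star) \ge (Z - f(S_{j-1}))/B$. Since greedy picks the element of maximum bang-per-buck (among all of $N$, not just $O$), its choice $i_j$ must meet this bound.

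Given the above inequality, a routine induction yields the standard recurrence $Z - f(S_j) \le (1 - w_j/B)(Z - f(S_{j-1}))$, which telescopes to $Z - f(S_t) \le Z \prod_{j=1}^{t}(1 - w_j/B) \le Z \exp\!\bigl(-\sum_{j=1}^t w_j/B\bigr)$, using $1 - x \le e^{-x}$. By the stopping rule, $\sum_{j=1}^t w_j \ge B$, so $Z - f(S) \le Z/e$ and $f(S) \ge (1-1/e)Z$, as claimed.

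The main (minor) subtlety is simply that the knapsack weights $w_j$ are nonuniform, so the clean cardinality-constraint telescoping $(1 - 1/k)^k$ must be replaced by the product $\prod_j (1 - w_j/B)$; this is handled by the $1-x \le e^{-x}$ bound together with the fact that the weights sum to at least $B$ at termination. No partial enumeration is needed here because we are comparing to $(1-1/e)Z$ at the moment greedy meets or exceeds the budget, rather than trying to produce a feasible solution with value $(1-1/e)Z$ (the latter is the harder statement proved by Sviridenko).
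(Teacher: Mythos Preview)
Your proof is correct and follows the standard greedy analysis. The paper actually states Lemma~\ref{lem:greedysubmod} without proof, treating it as a known fact; your argument is precisely the classical one and mirrors the proof sketch the paper does give for the analogous coverage-function statement (Lemma~\ref{lem:greedy}), namely the bang-per-buck inequality $f_{S_{j-1}}(i_j)/w_j \ge (Z - f(S_{j-1}))/B$ applied iteratively.
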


\section{Approximating \psc}
\label{sec:psc}
In this section we consider the algorithm for \psc from \cite{IV1}
and suggest a small variation that simplifies the algorithm and
analysis. The approach of \cite{IV1} is as follows. Given
an instance of \psc with a set system $(\cU,\cS)$ their algorithm
has the following high level steps.
\begin{enumerate}
\item Guess the largest weight set in an optimum solution. Remove all
  elements covered by it, remove all sets with weight larger than the
  guessed set. Adjust $k$ to account for covered elements. We now work
  with the residual instance of \psc.
\item Solve \psclp. Let $(x^*,z^*)$ be an optimum solution.
  For some threshold $\tau$ let
  $\cU_h = \{ e_j \mid z^*_j \ge \tau\}$ be the highly covered elements
  and let $\cU_\ell = \{e_j \mid z^*_j < \tau\}$ be shallow elements.
\item Solve a \setcover instance
  via the LP to cover all elements in $\cU_h$. The cost of this
  solution is at most $\frac{1}{\tau}\beta \sum_i w_i x_i^*$ since
  one can argue that the fractional solution $x'$ where $x'_i =
  \min\{1,x^*_i/\tau\}$ for each $i$ is a feasible
  fractional solution for \sclp to cover $\cU_h$.
\item Let $k'  = k - |\cU_h|$ be the residual number of elements
  that need to be covered from $\cU_\ell$. Round
  $(x^*,z^*)$ to cover $k'$ elements from $\cU_\ell$.
\end{enumerate}

The last step of the algorithm is the main technical one, and also
determines $\tau$. In \cite{IV1} $\tau$ is chosen to be $1/2$ and this
leads to their $2(\beta + 1)$-approximation. The rounding algorithm in
\cite{IV1} can be seen as an adaptation of pipage rounding
\cite{AgeevS} for \maxbcover. The details are somewhat technical and
perhaps obscure the high-level intuition that scaling up the LP
solution allows one to use a bicriteria approximation for \maxbcover.
Our contribution is to simplify the fourth step in the preceding
algorithm. Here is the last step in our algorithm; the other steps are
the same modulo the specific choice of $\tau$.
%We simply run the
%greedy algorithm on $\cU_\ell$ until we cover $k'$ elements from
%$\cU_\ell$. Thus the overall algorithm is simple. We also show a
%simple analysis via Lemma~\ref{lem:greedy}.

\begin{enumerate}
\item[4'.] Run Greedy to cover $k'$ elements from $\cU_\ell$.
\end{enumerate}

We now analyze the performance of our modified algorithm.

\begin{lemma}
  \label{lem:main}
  Suppose $\tau \le (1-1/e)$. Then running Greedy in the final step
  outputs a solution of total weight at most
  $\max_i w_i + \frac{1}{\tau}\sum_i w_i x_i^*$ to cover
  $k' = k - |\cU_h|$ elements from $\cU_\ell$.
\end{lemma}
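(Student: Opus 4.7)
The plan is to reinterpret step~4' as running Greedy on an auxiliary \maxbcover instance on ground set $\cU_\ell$ with budget $B = \frac{1}{\tau}\sum_i w_i x_i^*$, and then apply Lemma~\ref{lem:greedy}.

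First I construct a good fractional solution for (MBC-LP) on $\cU_\ell$ with the same set system, by scaling the \psclp solution: let $\hat{x}_i = x_i^*/\tau$ for each $i$ and $\hat{z}_j = z_j^*/\tau$ for each $e_j \in \cU_\ell$. Since $z_j^* < \tau$ on $\cU_\ell$, we have $\hat{z}_j < 1$, so the upper bound on $z$ is satisfied; the covering inequalities $\sum_{i: e_j \in S_i} \hat{x}_i \ge \hat{z}_j$ and the budget constraint $\sum_i w_i \hat{x}_i = B$ follow directly by scaling. Using $\sum_{e_j \in \cU} z_j^* \ge k$ together with $z_j^* \le 1$, the mass on $\cU_h$ is at most $|\cU_h|$, so $\sum_{e_j \in \cU_\ell} z_j^* \ge k - |\cU_h| = k'$. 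Consequently the value of $(\hat{x}, \hat{z})$ is $\sum_{e_j \in \cU_\ell} \hat{z}_j \ge k'/\tau$, and the (MBC-LP) optimum satisfies $Z \ge k'/\tau$.

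Next I apply the first bullet of Lemma~\ref{lem:greedy} to this auxiliary \maxbcover instance. Running Greedy until the accumulated weight meets or exceeds $B$ covers at least $(1-1/e)Z \ge (1-1/e)k'/\tau \ge k'$ elements, where the final inequality uses the hypothesis $\tau \le 1 - 1/e$.

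Finally I translate this back to the \psc stopping rule. The Greedy in step~4' selects exactly the same sequence of sets (it depends only on bang-per-buck), and terminates as soon as $k'$ elements of $\cU_\ell$ are covered. Let $t$ be its terminating step and $S_{i_t}$ be the set added there. The cumulative weight just before step $t$ must be strictly less than $B$, for otherwise the previous paragraph would force coverage of $k'$ elements no later than step $t - 1$, contradicting the minimality of $t$. Hence the total weight at termination is at most $B + w(S_{i_t}) \le \frac{1}{\tau}\sum_i w_i x_i^* + \max_i w_i$, which is the claimed bound. The one point requiring care is feasibility of the scaled solution for (MBC-LP); this is exactly where the threshold split is used, since it forces $z_j^*/\tau < 1$ on $\cU_\ell$.
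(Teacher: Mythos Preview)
Your argument is correct and essentially identical to the paper's own proof: both scale the \psclp solution by $1/\tau$ to obtain a feasible \mbclp solution on $\cU_\ell$ with budget $B=\frac{1}{\tau}\sum_i w_i x_i^*$ and value $Z\ge k'/\tau$, then invoke Lemma~\ref{lem:greedy} together with $\tau\le 1-1/e$ to conclude Greedy covers $k'$ elements by the time its weight reaches $B+\max_i w_i$. The only cosmetic differences are that the paper caps the scaled $x$-variables at $1$ (harmless, since \mbclp has no upper bound on $x$) and phrases the stopping rule as ``cover $k'$ or exceed $B$, whichever first,'' whereas you argue the contrapositive directly; neither changes the substance.
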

\begin{proof}
  It is easy to see that $\sum_{e_j \in \cU_\ell} z^*_j \ge k'$
  since $\sum_{e_j \in \cU} z^*_j \ge k$ and $z^*_j \le 1$ for each
  $e_j$.  Let $(\cU_\ell, \cS')$ be the set system obtained by
  restricting $(\cU, \cS)$ to $\cU_\ell$,  and let $(x',z')$ be the
  restriction of $(x^*,z^*)$ to the set system $(\cU_\ell, \cS')$. We
  have (i) $\sum_i w_i x'_i \le \sum_i w_i x^*_i$ and (ii)
  $\sum_{e_j \in \cU_\ell} z'_j \ge k'$ and (iii)
  $z'_j \le \tau \le (1-1/e)$ for all $e_j \in \cU_\ell$.

  Consider $(x'',z'')$ obtained from $(x',z')$ as follows.  For each
  $e_j \in \cU_\ell$ set $z''_j = \frac{1}{\tau} z'_j$ and note that
  $z''_j \le 1$. For each set $S_i$ set
  $x''_i = \min\{1,\frac{1}{\tau} x'_i\}$. It is easy to see that
  $(x'',z'')$ is a feasible solution to \psclp. Note that
  $Z = \sum_{e_j \in \cU_\ell} z''_j \ge \frac{1}{\tau} k'$. Let
  $B = \sum_i w_i x''_i \le \frac{1}{\tau}\sum_i w_i x^*_i$. The
  fractional solution $(x'',z'')$ is also a feasible solution to the
  LP formulation \mbclp. We apply Lemma~\ref{lem:greedy} to this
  fractional solution. Suppose we stop Greedy when it covers $k'$
  elements or when it first crosses the budget $B$, whichever comes first. Clearly the total weight is at most
  $B + \max_i w_i$.  We argue that at least $k'$ elements are covered
  when we stop Greedy. The only case to argue is when Greedy is
  stopped when the weight of sets picked by it exceeds $B$ for the
  first time. From Lemma~\ref{lem:greedy} it follows that Greedy
  covers at least $(1-1/e)Z$ elements but since
  $Z \ge \frac{1}{\tau} k'$ it implies that Greedy covers at least
  $k'$ elements when it is stopped.
\end{proof}

We formally state a lemma to bound the cost of covering $\cU_h$.
We sketch the simple proof for the sake of completeness, it is
identical to that from \cite{IV1}.
\begin{lemma}
  \label{lem:highlycovered}
  The cost of covering $\cU_h$ is at most
  $\beta \frac{1}{\tau}\sum_i w_i x^*_i$.
\end{lemma}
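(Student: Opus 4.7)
The plan is to exhibit a feasible fractional solution to (SC-LP) for the \setcover instance on the restricted set system $(\cU_h, \cS)$ whose cost is at most $\frac{1}{\tau}\sum_i w_i x^*_i$. Since the family of set systems is deletion-closed, removing the elements in $\cU_\ell$ keeps us in the same family, and by hypothesis this restricted \setcover instance admits a $\beta$-approximation via the natural LP. Combining these two observations will give an integral solution covering $\cU_h$ of cost at most $\beta \cdot \frac{1}{\tau}\sum_i w_i x^*_i$.

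The candidate fractional solution is the natural scaling $x'_i = \min\{1, x^*_i/\tau\}$ for each $S_i \in \cS$. I would first bound its cost trivially: $\sum_i w_i x'_i \le \sum_i w_i x^*_i / \tau$, using only that $w_i \ge 0$. The content is in the feasibility check. For any $e_j \in \cU_h$, the constraint in \psclp says $\sum_{i: e_j \in S_i} x^*_i \ge z^*_j \ge \tau$, so $\sum_{i: e_j \in S_i} x^*_i/\tau \ge 1$.

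The one mildly delicate point, which I would flag as the main (albeit small) obstacle, is that the operation $a \mapsto \min\{1,a\}$ is applied coordinatewise and so could in principle destroy a $\ge 1$ sum. I would resolve this by a one-line case split: if some term $x^*_i/\tau \ge 1$ for an $i$ with $e_j \in S_i$, then $x'_i = 1$ and the covering constraint at $e_j$ is already met; otherwise every term is unchanged by the clipping and the sum is still at least $1$. This establishes that $x'$ is feasible for (SC-LP) on $(\cU_h, \cS)$, completing the argument after invoking the $\beta$-approximation guarantee on the deletion-closed family.
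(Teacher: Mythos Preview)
Your proposal is correct and follows essentially the same approach as the paper: define $x'_i = \min\{1, x^*_i/\tau\}$, verify it is feasible for \sclp on $\cU_h$, bound its cost by $\frac{1}{\tau}\sum_i w_i x^*_i$, and invoke the $\beta$-integrality-gap assumption on the deletion-closed family. Your case split on the clipping is a useful explicit justification of what the paper simply calls ``easy to see.''
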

\begin{proof}
    Recall that $z^*_j \ge \tau$ for each $e_j \in \cU_h$. Consider
  $x'_i = \min\{1,\frac{1}{\tau}x^*_i\}$. It is easy to see that $x'$ is
  a feasible fractional solution for \sclp to cover $\cU_h$ using sets
  in $\cS$. Since the set family is deletion-closed, and the integrality
  gap of the \sclp is at most $\beta$ for all instances in the family,
  there is an integral solution covering $\cU_h$ of cost at most
  $\beta \sum_i w_i x'_i \le \frac{1}{\tau}\beta \sum_i w_i x^*_i$.
\end{proof}

\begin{theorem}
  Setting $\tau = (1-1/e)$, the algorithm outputs a feasible solution
  of total cost at most $(1-1/e)(\beta + 1) \opt$ where $\opt$ is the
  value of an optimum integral solution.
\end{theorem}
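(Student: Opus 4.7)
The plan is to assemble the bound from the two preceding lemmas together with the preprocessing step that guessed the heaviest set. To execute this I would enumerate over all candidate guesses for the heaviest set $S^*$ in some fixed optimum (returning the best feasible solution across all branches) and analyze the branch corresponding to the correct guess. Let $w^* = w(S^*)$. After removing all elements covered by $S^*$ and all sets of weight strictly larger than $w^*$, the surviving sets of $\opt \setminus \{S^*\}$ form a feasible integer solution to the residual \psc instance, so the optimum value of \psclp on the residual satisfies $\sum_i w_i x_i^* \le \opt - w^*$.

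Next I would split the cost of the returned solution into three pieces: the cost $w^*$ of $S^*$ itself (paid up front by the preprocessing step); the cost of the set cover that covers the heavily covered elements $\cU_h$, bounded by $\frac{\beta}{\tau}\sum_i w_i x_i^*$ via Lemma~\ref{lem:highlycovered}; and the cost of the Greedy step that covers the remaining $k'$ elements of $\cU_\ell$, bounded by $\max_i w_i + \frac{1}{\tau}\sum_i w_i x_i^*$ via Lemma~\ref{lem:main}. The key observation is that after the preprocessing step every surviving set has weight at most $w^*$, so $\max_i w_i \le w^*$ and the total cost is at most
\[
2w^* + \frac{\beta+1}{\tau}\bigl(\opt - w^*\bigr).
\]

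Finally I would plug in $\tau = 1-1/e$, which is the largest value for which Lemma~\ref{lem:main} is valid and therefore the choice that minimizes the dominant $(\beta+1)/\tau$ factor. For the parameter regime of interest the coefficient $2 - (\beta+1)/\tau$ of $w^*$ is non-positive, so the $2w^*$ surcharge from the preprocessing step is absorbed by the savings $(\beta+1)w^*/\tau$ in the LP-value term, and the bound collapses to a clean multiple of $\opt$.

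The main obstacle is really just bookkeeping: matching up the $w^*$ contribution paid by the preprocessing against the $\max_i w_i$ contribution paid by the Greedy overshoot, and against the slack $\opt - w^*$ between the LP value on the residual instance and $\opt$. No new algorithmic ideas beyond those in Lemmas~\ref{lem:highlycovered} and~\ref{lem:main} are needed, and the enumeration over guesses of $S^*$ adds only a polynomial factor in the running time.
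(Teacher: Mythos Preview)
Your proposal is correct and follows essentially the same argument as the paper: guess the heaviest set $S^*$ of weight $W$, bound the residual LP optimum by $\opt - W$, apply the two lemmas to get total cost $2W + \frac{\beta+1}{\tau}(\opt - W)$, and then observe that the coefficient of $W$ is non-positive (the paper makes this explicit by writing ``since $\beta \ge 1$''). The only cosmetic difference is that the paper spells out the final arithmetic as $\frac{e}{e-1}(\beta+1)\opt + W\bigl(2 - \frac{e}{e-1}(\beta+1)\bigr)$, whereas you leave the last simplification verbal.
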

\begin{proof}
  Fix an optimum solution.
  Let $W$ be the weight of a maximum weight set in the optimum
  solution. In the first step of the algorithm we can assume that the
  algorithm has correctly guessed a maximum weight set from the fixed
  optimum solution. Let $\opt' = \opt - W$. In the residual instance
  the weight of every set is at most $W$. The optimum solution value
  for \psclp, after guessing the largest weight set and removing it,
  is at most $\opt'$. From Lemma~\ref{lem:highlycovered}, the cost of
  covering $\cU_h$ is at most $\frac{e}{e-1}\beta \opt'$.  From
  Lemma~\ref{lem:main}, the cost of covering $k'$ elements from
  $\cU_\ell$ is most $\frac{e}{e-1}\opt' + W$. Hence the total cost,
  including the weight of the guessed set, is at most
  $$W + \frac{e}{e-1}\beta \opt' + \frac{e}{e-1}\opt' + W =  \frac{e}{e-1}(\beta + 1)
  \opt + W (2 - \frac{e}{e-1}(\beta+1)) \le \frac{e}{e-1}(\beta + 1)
  \opt$$
  since $\beta \ge 1$.
\end{proof}

\section{A bicriteria approximation for \multisubmodsc}
\label{sec:bicriteria}
In this section we consider \multisubmodsc. Let $N$ be a finite ground
set. For each $j \in [h]$ we are given a submodular function $f_j:2^{N}\rightarrow
\mathbb{R}_+$. We are also given a non-negative weight function $w: N
\rightarrow \mathbb{R}_+$. The goal is to solve the following
covering problem:
\begin{eqnarray*}
  \min_{S \subseteq N} w(S)  & & \text{s.t}\\
   f_j(S) & \ge & 1 \quad 1 \le j \le h
\end{eqnarray*}
We say that $i \in N$ is active in constraint $j$ if
$f_j(i) > 0$, otherwise it is inactive.
We say that the given instance is
$r$-sparse if each element $i \in N$ is active in at most $r$
constraints.

\begin{theorem}
  There is a randomized polynomial-time approximation algorithm that
  given an $r$-sparse instance of \multisubmodsc outputs a set $S
  \subseteq N$ such that (i) $f_j(S) \ge (1-1/e-\eps)$ for $1 \le j \le h$, and
  (ii) $E[w(S)] = O(\frac{1}{\eps}\ln r) \opt$.
\end{theorem}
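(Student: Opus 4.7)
The plan is a scaled LP rounding with an alteration cleanup, in the spirit of recent sparse \cip results such as \cite{CQ19} and the multilinear-extension viewpoint already used earlier in this paper. My starting point would be the convex relaxation
\[
\min\bigl\{w \cdot x \;:\; \tilde f_j(x) \ge 1 \text{ for } j \in [h],\ 0 \le x \le 1\bigr\},
\]
which by Lemma~\ref{lem:extensions} is polynomial-time solvable for the matroid-rank-sum class of submodular functions relevant here (in particular for \partitionsc-style coverage constraints). Let $x^*$ be an optimum LP solution with value $\tau \le \opt$. Set $\lambda = C\ln(r)/\eps$ for a sufficiently large constant $C$, form $y = \min\{\mathbf{1}, \lambda x^*\}$ coordinate-wise, and independently include each $i \in N$ in $T$ with probability $y_i$. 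The expected weight is $w \cdot y \le \lambda \tau = O(\ln(r)/\eps)\opt$, matching the claimed bound.

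For each constraint $j$, I would establish $f_j(T) \ge 1 - 1/e - \eps$ in two stages. First, monotonicity of $\tilde f_j$ gives $\tilde f_j(y) \ge \tilde f_j(x^*) \ge 1$, and Lemma~\ref{lem:extensions} then yields $F_j(y) = \E[f_j(T)] \ge (1-1/e)\tilde f_j(y) \ge 1-1/e$. Second, replacing $f_j$ by $\min\{1,f_j\}$ (which remains in the matroid-rank-sum class) makes it $1$-Lipschitz, so Lemma~\ref{lem:submod-concentration} controls the lower tail of $f_j(T)$ around its mean $F_j(y) \in [1-1/e, 1]$. Since that mean is only a constant, the per-constraint failure probability cannot be driven below a constant by scaling alone---this is where sparsity must enter. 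For every constraint $j$ whose rounded value falls below $1-1/e-\eps$, I would re-run the greedy algorithm of Lemma~\ref{lem:greedysubmod} on $j$'s residual, charging the expected repair cost to the local LP mass $\tau_j := \sum_{i \in A_j} w_i x^*_i$. Because $r$-sparsity implies $\sum_j \tau_j \le r\tau$, and the scaling $\lambda = \Theta(\ln(r)/\eps)$ can be tuned so that the per-constraint failure probability is $1/r^{\Omega(1)}$, the expected total repair cost is $O\bigl(r\tau / r^{\Omega(1)}\bigr) = o(\tau)$ and is absorbed into $O(\ln(r)/\eps)\opt$.

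The main obstacle is precisely this sparsity-to-$\ln r$ accounting: a naive union bound over the $h$ constraints would produce $\ln h$ instead of $\ln r$, so the argument must exploit that each element touches at most $r$ constraints rather than all of them. Since the truncated $f_j$ are bounded by $1$, simply multiplying $x^*$ by $\lambda$ cannot push $F_j(y)$ above $1$; the concentration bound has to be paired with the local LP cost through the alteration step, and making the scaling factor, the concentration threshold, and the local charging all balance is the delicate calculation I expect to require the most care. An alternative, closer in spirit to \cite{CQ19}, is to install knapsack-cover inequalities in the LP from the outset so that sparsity is already built in before rounding; either route should yield the stated $(1-1/e-\eps,\,O(\ln(r)/\eps)\opt)$ bicriteria guarantee.
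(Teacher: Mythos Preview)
Your proposal has the right high-level shape---solve a continuous relaxation, round, then repair failed constraints and charge the repair locally via sparsity---but the central step does not go through as written. You truncate $f_j$ to $\min\{1,f_j\}$ so that it is $1$-Lipschitz and then invoke Lemma~\ref{lem:submod-concentration}. Once truncated, however, $F_j(y)\le 1$ regardless of how large $\lambda$ is, so the lower-tail bound $e^{-\delta^2 F_j(y)/2}$ is at best a fixed constant. You correctly note this (``the per-constraint failure probability cannot be driven below a constant by scaling alone''), yet two sentences later you assert that ``the scaling $\lambda=\Theta(\ln(r)/\eps)$ can be tuned so that the per-constraint failure probability is $1/r^{\Omega(1)}$.'' These two claims contradict each other, and it is the first that is right. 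With only a constant failure probability the expected repair cost is $\Omega(1)\cdot\sum_j \tau_j$, which sparsity bounds by $r\tau$, not $O(\tau)$; the final guarantee degrades to $O(r)\opt$ rather than $O(\frac{1}{\eps}\ln r)\opt$.

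The paper supplies exactly the missing idea: instead of scaling $x$ by $\lambda$ and rounding once, it rounds the (unscaled) solution $\ell=\Theta(\frac{1}{\eps}\ln r)$ times independently and takes the union $S=\bigcup_{k}S_k$. For each round, $\E[f_j(S_k)]=F_j(x)\ge 1-1/e-\eps$ and $f_j\le 1$, so a one-line Markov argument (not Chernoff) gives $\Pr[f_j(S_k)<1-1/e-2\eps]\le 1-\eps$. By monotonicity, $f_j(S)<1-1/e-2\eps$ forces every round to fail, which by independence has probability at most $(1-\eps)^\ell\le 1/r$. Now the repair-cost accounting you sketched works: $\E[w(T)]\le (1/r)\sum_j 2\opt_j\le 2\opt$ by $r$-sparsity (Lemma~\ref{lem:sumcosts}), while $\E[w(S)]\le \ell\opt$.

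A secondary issue: your relaxation $\{\tilde f_j(x)\ge 1\}$ is only tractable for the matroid-rank-sum class, whereas the theorem is stated for arbitrary monotone submodular $f_j$. The paper instead starts from the multilinear relaxation $F_j(x)\ge 1$ and invokes the continuous-greedy result of \cite{CVZ,CJV} to obtain an $x$ with $\sum_i w_ix_i\le \opt$ and $F_j(x)\ge 1-1/e-\eps$ for all $j$, which works in the general setting.
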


The rest of the section is devoted to the proof of the preceding
theorem.  We will assume without loss of generality that for each $i$,
$f_i(N) \le 1$; otherwise we can work with the truncated function
$\min \{1, f_i(S)\}$ which is also submodular. This technical
assumption plays a role in the analysis later.

We consider a continuous relaxation of the problem based on the
multilinear extension. Instead of finding a set $S$ we consider
finding a fractional point $x \in [0,1]^N$. For any value $B \ge \opt$
where $\opt$ is the optimum value of the original problem, the
following continuous optimization problem has a feasible solution.

\begin{eqnarray*}
  \text{(MP-Submod-Relax)} \quad \sum_i w_i x_i & \le & B \\
  F_j(x) & \ge & 1 \quad 1 \le j \le h \\
                 x \ge 0
\end{eqnarray*}

One cannot hope to solve the preceding continuous optimization problem
since it is NP-Hard. However the following approximation result is
known and is based on extending the continuous greedy algorithm
of Vondrak \cite{V08,CCPV}.

\begin{theorem}[\cite{CVZ,CJV}]
  There is a randomized polynomial-time algorithm that given an
  instance of MP-Submod-Relax and value oracle access to the
  submodular functions $f_1,\ldots,f_h$, with high probability, either
  correctly outputs that the instance is not feasible or outputs an
  $x$ such that (i) $\sum_i w_i x_i \le B$ and (ii)
  $F_i(x) \ge (1-1/e - \eps)$ for $1 \le i \le h$.
\end{theorem}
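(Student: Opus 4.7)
The plan is to adapt Vondrák's continuous greedy \cite{V08,CCPV} with multiplicative weights over the $h$ objectives, following the multi-objective framework of \cite{CVZ,CJV}. The feasible polytope is the down-closed set $P = \{x \in [0,1]^N : \sum_i w_i x_i \le B\}$, over which linear functions can be optimized in polynomial time. Thanks to the normalization $f_j(N) \le 1$ (made without loss of generality in the paragraph preceding the theorem), each $F_j$ has $O(1)$-bounded gradient and Hessian on $[0,1]^N$. Any integral solution to MP-Submod-Relax certifies existence of a fractional $y^\star \in P$ with $F_j(y^\star) \ge 1$ for all $j$, so feasibility of MP-Submod-Relax implies feasibility of its multilinear relaxation over $P$.

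The algorithm runs a continuous greedy trajectory for $T = \Theta(\eps^{-2}\log h)$ steps of size $\delta = 1/T$, starting from $x(0) = 0$ and uniform weights $\lambda(0)$. At step $t$: estimate each gradient $\nabla F_j(x(t))$ by the standard random sampling oracle; compute $v(t) = \arg\max_{v \in P} \sum_j \lambda_j(t)\,\langle \nabla F_j(x(t)), v\rangle$ by solving a single LP over $P$; update $x(t+1) = x(t) + \delta\, v(t)$; estimate $F_j(x(t+1))$ and re-weight $\lambda_j(t+1) \propto \lambda_j(t)\exp(\eta(1 - F_j(x(t+1))))$ with $\eta = \Theta(\eps)$. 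At termination, output $x(T)$ if $\min_j F_j(x(T)) \ge 1 - 1/e - \eps$, and otherwise declare infeasibility.

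The analysis combines three standard ingredients. For each monotone submodular $F_j$ and each $v \ge 0$, one has $\langle \nabla F_j(x), v \rangle \ge F_j(x \vee v) - F_j(x) \ge F_j(v) - F_j(x)$, which together with a second-order Taylor expansion yields
\[
F_j(x(t+1)) - F_j(x(t)) \ge \delta\,\langle \nabla F_j(x(t)), v(t)\rangle - O(\delta^2),
\]
the error term controlled by the $O(1)$ Hessian bound. The LP choice of $v(t)$ together with the gradient inequality applied at the hypothetical $y^\star$ gives the Lagrangian progress bound $\sum_j \lambda_j(t)\,\langle \nabla F_j(x(t)), v(t)\rangle \ge \sum_j \lambda_j(t)(1 - F_j(x(t)))$. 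Summed over the trajectory this yields $\dot F_j(x(t)) \ge 1 - F_j(x(t))$ in the $\lambda$-weighted sense, which integrates to the familiar $(1-1/e)$ factor. Finally, the Freund--Schapire MWU regret bound on the losses $\ell_j(t) = 1 - F_j(x(t)) \in [0,1]$ converts the weighted guarantee into a uniform per-constraint guarantee, giving $\min_j F_j(x(T)) \ge 1 - 1/e - O(\eps)$ provided $T = \Omega(\eps^{-2}\log h)$.

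The main technical obstacle is balancing three error sources so that each contributes only $O(\eps)$: the $O(\delta)$ Euler discretization error, the $O(\sqrt{\log h/T})$ MWU regret, and the sampling error in the gradient and value estimates. Each is controlled by the choice of $T$, $\delta$, $\eta$, and sample size, with a Chernoff plus union bound over $O(hT)$ estimates supplying the ``with high probability'' guarantee. Correctness of the infeasibility branch is automatic: the chain of inequalities above invoked $y^\star$ only as a hypothetical feasible point, so if the output condition $\min_j F_j(x(T)) \ge 1 - 1/e - \eps$ fails (outside the low-probability failure event of the samplers and MWU regret), then no such $y^\star$ exists and declaring infeasibility is correct.
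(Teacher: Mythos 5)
The paper does not prove this statement -- it imports it from \cite{CVZ,CJV} -- so the comparison is against the standard argument in those works, which runs continuous greedy where at each step one solves a \emph{feasibility} LP containing one linear constraint per objective: find $v$ in the knapsack polytope with $\langle \nabla F_j(x(t)), v\rangle \ge 1 - F_j(x(t))$ for \emph{every} $j$ simultaneously (feasible because the witness $y^\star$ satisfies all $h$ constraints at once, by the same gradient inequality you use). That yields the per-objective, per-step differential inequality $F_j(x(t+1)) - F_j(x(t)) \ge \delta(1-F_j(x(t))) - O(\delta^2)$ for each $j$, which is what integrates to $1-1/e$ per objective. No multiplicative weights are needed.

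Your MWU scalarization has a genuine gap exactly at the step ``integrates to the familiar $(1-1/e)$ factor'' plus ``regret converts the weighted guarantee into a uniform per-constraint guarantee.'' The oracle step only guarantees $\sum_j \lambda_j(t)\langle \nabla F_j(x(t)), v(t)\rangle \ge \sum_j \lambda_j(t)(1-F_j(x(t)))$, and a Freund--Schapire regret bound converts this into a \emph{cumulative} (time-averaged) statement per objective, namely $F_j(x(T)) \gtrsim \frac{1}{T}\sum_{t<T}\bigl(1-F_j(x(t))\bigr) - O(\eps)$. Since $F_j$ is nondecreasing along the trajectory, $\frac{1}{T}\sum_{t<T}(1-F_j(x(t))) \ge 1 - F_j(x(T))$, and the averaged inequality only forces $F_j(x(T)) \ge 1/2 - O(\eps)$, not $1-1/e-\eps$: the $(1-1/e)$ bound needs the shortfall to contract multiplicatively at (essentially) every step for each fixed $j$, which the weighted oracle guarantee does not give (the LP may route all gain to whichever objectives carry the weight at that step). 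A repaired MWU route exists but must track a different loss (e.g.\ relative progress $\langle\nabla F_j(x(t)),v(t)\rangle/(1-F_j(x(t)))$, with clipping to control the width of the payoffs, which you also do not bound -- $\langle \nabla F_j(x), v\rangle$ is not in $[0,1]$ in general); as written, the simpler fix is to drop MWU and solve the simultaneous-constraint LP as in \cite{CVZ}. Two smaller points: with only second-partial bounds of $O(1)$ the per-step discretization error is $O(\delta^2 \|v\|_1^2)$, so $T=\Theta(\eps^{-2}\log h)$ steps is not enough and $T$ must also be polynomial in $n$; and the infeasibility branch is only correct once the per-objective guarantee itself is established, so it inherits the gap above.
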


Using the preceding theorem and binary search one can obtain an $x$
such that $\sum_{i\in N} w_i x_i \le \opt$ and
$F_j(x) \ge (1-1/e - \eps)$ for $1 \le j \le h$.  It
remains to round this solution. We use the following algorithm
based on the high-level framework of randomized rounding plus
alteration.

\begin{enumerate}
\item Let $S_1, S_2,\ldots,S_\ell$ be random sets obtained by picking elements
  independently and randomly $\ell$ times according to the fractional solution
  $x$. Let $S = \cup_{k=1}^\ell S_k$.
\item For each $j \in [h]$ if $f(S) < (1-1/e-2\eps)$,
  fix the constraint. That is, find a set $T_j$ using the greedy algorithm
  (via Lemma~\ref{lem:greedysubmod}) such that $f_j(T_j) \ge (1-1/e)$.
  We implicitly set $T_j = \emptyset$ if $f(S) \ge (1-1/e - 2\eps)$.
\item Output $S \cup T$ where $T = \cup_{j=1}^h T_j$.
\end{enumerate}

It is easy to see that $S \cup T$ satisfies the property that
$f_j(S \cup T) \ge (1-1/e - 2\eps)$ for $ j \in [h]$.
It remains to choose $\ell$ and bound the expected cost of $S \cup T$.

The following is easy from randomized rounding stage of the algorithm.
\begin{lemma}
  $\E[w(S)] = \ell \sum_{i=1}^h w_i x_i \le \ell \opt$.
\end{lemma}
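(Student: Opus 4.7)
The plan is to bound $\E[w(S)]$ by a straightforward union bound argument over the $\ell$ independent rounding trials, and then invoke the guarantee on $x$ provided by the continuous-greedy based MP-Submod-Relax solver.

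First, I would fix an element $i \in N$ and observe that because each $S_k$ is drawn independently from the product distribution induced by $x$, we have $\Pr[i \in S_k] = x_i$ for every $k$. Since $S = \bigcup_{k=1}^{\ell} S_k$, a union bound gives $\Pr[i \in S] \le \sum_{k=1}^{\ell} \Pr[i \in S_k] = \ell x_i$ (one could also write the exact probability $1 - (1-x_i)^\ell$ and use Bernoulli's inequality, but the union bound suffices for the claim).

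Next, by linearity of expectation,
\begin{equation*}
  \E[w(S)] \;=\; \sum_{i \in N} w_i \Pr[i \in S] \;\le\; \ell \sum_{i \in N} w_i x_i.
\end{equation*}
Finally, recall from the preceding discussion that $x$ was obtained, via binary search using the approximation theorem for MP-Submod-Relax, to satisfy $\sum_{i \in N} w_i x_i \le \opt$. Combining these two bounds yields $\E[w(S)] \le \ell \sum_{i \in N} w_i x_i \le \ell \opt$, which is exactly the statement of the lemma (reading the displayed $\sum_{i=1}^{h}$ as $\sum_{i \in N}$, since the weights are on the ground set).

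There is no real obstacle here: the argument uses only linearity of expectation, a union bound over the $\ell$ samples, and the guarantee on $x$ already established. The nontrivial work in this section lies in choosing $\ell$ and in showing that the alteration set $T$ has small expected weight, both of which come after this lemma.
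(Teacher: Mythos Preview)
Your proof is correct and matches the paper's intent; the paper itself does not give a proof, merely stating that the lemma ``is easy from the randomized rounding stage of the algorithm.'' Your union-bound/linearity argument is exactly the natural justification, and your reading of $\sum_{i=1}^h$ as $\sum_{i\in N}$ is right (the $h$ is a typo). One small remark: the lemma as written asserts \emph{equality} $\E[w(S)] = \ell\sum_i w_i x_i$, which is literally true only if one accounts for each round separately (i.e., $\E[\sum_{k=1}^\ell w(S_k)] = \ell\sum_i w_i x_i$); for the union $S=\cup_k S_k$ you correctly obtain only the inequality $\E[w(S)]\le \ell\sum_i w_i x_i$, which is all that is needed downstream.
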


We now bound the probability that any fixed constraint is not
satisfied after the randomized rounding stage of the algorithm.
Let $I_j$ be the indicator for the event that $f_j(S) < (1-1/e-2\eps)$.

\begin{lemma}
  For any $j \in [h]$, $\Pr[I_j] \le \alpha^{-\ell}$,
    where $\alpha \le 1 - \eps$ for sufficiently small $\eps > 0$.
\end{lemma}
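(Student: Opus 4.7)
The plan is to reduce the bad event $I_j$ to a conjunction of $\ell$ independent single-round failures using monotonicity, and then to bound the per-round failure probability by a one-line Markov argument.

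First I would observe that since $S_k \subseteq S = \bigcup_{k=1}^{\ell} S_k$ and $f_j$ is monotone, $f_j(S) \ge f_j(S_k)$ for every $k \in [\ell]$. Therefore $I_j = \{f_j(S) < 1 - 1/e - 2\eps\}$ implies $\{f_j(S_k) < 1 - 1/e - 2\eps\}$ for every $k$. Since $S_1, \ldots, S_\ell$ are i.i.d.\ draws from the product distribution induced by $x$, this yields
$$\Pr[I_j] \;\le\; \bigl(\Pr[f_j(S_1) < 1 - 1/e - 2\eps]\bigr)^{\ell}.$$

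Next I would bound the per-round failure probability via Markov's inequality applied to the non-negative random variable $Y = 1 - f_j(S_1)$. The section's normalization $f_j(N) \le 1$ together with monotonicity guarantees $f_j(S_1) \in [0,1]$, so $Y \ge 0$, and $\E[Y] = 1 - F_j(x) \le 1/e + \eps$ by the feasibility of $x$ in the continuous relaxation. The bad single-round event is precisely $\{Y > 1/e + 2\eps\}$, so Markov gives
$$\Pr[f_j(S_1) < 1 - 1/e - 2\eps] \;\le\; \frac{\E[Y]}{1/e + 2\eps} \;\le\; \frac{1/e + \eps}{1/e + 2\eps} \;=\; \alpha.$$
A direct calculation gives $1 - \alpha = \eps/(1/e + 2\eps)$, which is at least $\eps$ whenever $1/e + 2\eps \le 1$, i.e.\ $\eps \le (e-1)/(2e)$. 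Hence $\alpha \le 1 - \eps$ for all sufficiently small $\eps$, and combining the two steps yields the claimed bound.

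I do not anticipate any real obstacle: the only conceptual point is to exploit monotonicity of $f_j$ so that independent resampling converts a constant-probability per-round success into an exponentially small overall failure. In particular, the stronger submodular concentration inequality in Lemma~\ref{lem:submod-concentration} is not needed here, since we only require the per-round failure probability to be bounded \emph{away from} $1$, not made tiny; the two missing constant-factor slack terms in the thresholds $1-1/e-\eps$ (on $F_j(x)$) and $1-1/e-2\eps$ (for the bad event) are exactly what drive the Markov bound below $1$.
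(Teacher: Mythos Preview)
Your proof is correct and follows essentially the same approach as the paper. The paper's averaging inequality $\alpha(1-1/e-2\eps)+(1-\alpha)\ge(1-1/e-\eps)$ is exactly Markov's inequality on $Y=1-f_j(S_1)$, yielding the identical bound $\alpha\le\frac{1/e+\eps}{1/e+2\eps}$; your use of monotonicity to get $\Pr[I_j]\le\prod_k\Pr[I_{j,k}]$ is in fact more careful than the paper, which writes equality there.
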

\begin{proof}
  Let $I_{j,k}$ be indicator for the event that
    $f_j(S_k) < (1-1/e-2\eps)$.  From the definition of the
  multilinear extension, for any $k \in [\ell]$,
  $\E[f_j(S_k)] = F_j(x)$. Hence, $\E[f_j(S_k)] \ge (1-1/e -\eps)$.
  Let $\alpha = \Pr[I_{j,k}]$. We upper bound $\alpha$ as follows.
  Recall that $f_j(N) \le 1$ and hence by monotonicity we have
  $f_j(A) \le 1$ for all $A \subseteq N$. Since $\E[f_j(S_k)] \ge
  (1-1/e-\eps)$ we can upper bound $\alpha$ by the following:
  $$\alpha (1-1/e-2\eps) + (1-\alpha) \ge (1-1/e - \eps).$$
  Rearranging we have
  $\alpha \le \frac{(1 + e\eps)}{(1 + 2e\eps)} = \frac{1}{1+
    \frac{e\eps}{1+e\eps}}$. Using the fact that for
  $\frac{1}{1+x} \le 1- x/2$ for sufficiently small $x > 0$, we
  simplify and see that
  $\alpha \le 1- \frac{e \eps}{2(1+e \eps)} \le 1 - \eps$ for sufficiently
  small $\eps > 0$.  Since the
  sets $S_1,\ldots,S_\ell$ are chosen independently,
  $$\Pr[I_j] = \prod_{k = 1}^\ell \Pr[I_{j,k}] \le \alpha^{-\ell}.$$
\end{proof}

\begin{remark}
  The simplicity of the previous proof is based on the use of the
  multilinear extension which is well-suited for randomized rounding. The
  assumption that $f_j(N) \le 1$ is technically important and it is
  easy to ensure in the general submodular case but is not
  straightforward when working with specific classes of functions.
\end{remark}

\begin{lemma}
  \label{lem:sumcosts}
  Let $\opt_j$ be the value of an optimum solution to the problem
  $\min w(S) \text{~s.t~} f_j(S) \ge 1$. Then, $\sum_{j=1}^h \opt_j \le
  r \opt$.
\end{lemma}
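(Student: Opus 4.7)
My plan is to use the optimum solution to the multi-constraint problem to construct, for each constraint $j$, a feasible solution $S_j^{*}$ for the single-constraint problem that is a \emph{minimal} subset of $S^{*}$, and then use sparsity to bound how often each element $i$ can appear across the $S_j^{*}$'s.

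Concretely, let $S^{*}$ be an optimum solution to the multi-constraint problem, so $w(S^{*}) = \opt$ and $f_j(S^{*}) \ge 1$ for every $j \in [h]$. For each $j$, start from $S^{*}$ and greedily remove elements as long as the constraint $f_j(\cdot) \ge 1$ remains satisfied; let $S_j^{*} \subseteq S^{*}$ be the resulting minimal feasible subset. Since $S_j^{*}$ is feasible for the $j$-th single-constraint problem, we have $\opt_j \le w(S_j^{*})$, and therefore
\begin{equation*}
  \sum_{j=1}^{h} \opt_j \;\le\; \sum_{j=1}^{h} w(S_j^{*}) \;=\; \sum_{i \in S^{*}} w_i \cdot \bigl|\{\, j : i \in S_j^{*} \,\}\bigr|.
\end{equation*}

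The main step is to argue that every $i \in S^{*}$ belongs to at most $r$ of the sets $S_j^{*}$. By minimality of $S_j^{*}$, for each $i \in S_j^{*}$ we must have $f_j(i \mid S_j^{*} \setminus \{i\}) > 0$ (otherwise $i$ could be dropped while keeping $f_j \ge 1$). By submodularity and monotonicity, marginals only decrease as we add elements, so $f_j(\{i\}) \ge f_j(i \mid S_j^{*} \setminus \{i\}) > 0$. In other words, $i$ is active in constraint $j$. By the $r$-sparsity assumption each $i \in N$ is active in at most $r$ constraints, so $|\{j : i \in S_j^{*}\}| \le r$. Plugging this into the display above yields $\sum_j \opt_j \le r \sum_{i \in S^{*}} w_i = r\,\opt$.

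The only potential subtlety is the step that translates minimality into activity: it relies on monotonicity of the $f_j$'s so that marginals are bounded by singleton values. This is consistent with the paper's standing assumption that we work with monotone submodular (indeed polymatroid-like) functions, and with the earlier truncation $f_j \leftarrow \min\{1, f_j\}$, which preserves monotone submodularity. No further calculation is needed.
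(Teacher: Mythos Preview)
Your proof is correct and follows essentially the same approach as the paper: restrict the optimum $S^{*}$ to the elements active in each constraint $j$, then use $r$-sparsity to bound the double-count. The paper is slightly more direct---it simply takes $S^{*}\cap N_j$ (where $N_j$ is the set of elements active for $f_j$) and observes this is feasible since inactive elements have zero marginal---whereas your minimality argument is a small detour that ultimately establishes the same containment $S_j^{*}\subseteq N_j$.
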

\begin{proof}
  Let $S^*$ be an optimum solution to the problem of covering all $h$
  constraints. Let $N_j$ be the set of active elements for constraint
  $j$. It follows that $S^* \cap N_j$ is a feasible solution for the
  problem of covering just $f_j$. Thus $\opt_j \le w(S^* \cap N_j)$.
  Hence
  $$\sum_j \opt_j \le \sum_j w(S^* \cap N_j) = \sum_{i \in S^*} w_i
  \sum_{j:i \in N_j} 1 \le r w(S^*) = r \opt.$$
\end{proof}

We now bound the expected cost of $T$
\begin{lemma}
  $\E[w(T)] \le 2\alpha^{-\ell} \sum_j \opt_j \le 2 \alpha^{-\ell} r \opt$.
\end{lemma}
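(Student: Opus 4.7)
The plan is to decompose the fix-up cost constraint by constraint and combine the deterministic greedy guarantee of Lemma~\ref{lem:greedysubmod} with the per-constraint failure probability from the preceding lemma. Write $T = \bigcup_{j=1}^h T_j$ with $T_j = \emptyset$ unless the event $I_j$ occurs. Then linearity of expectation gives
\[
\E[w(T)] \;\le\; \sum_{j=1}^h \Pr[I_j]\cdot B_j,
\]
where $B_j$ is any deterministic upper bound on $w(T_j)$ when the fix-up step is invoked for constraint $j$. Since the preceding lemma already supplies $\Pr[I_j] \le \alpha^{-\ell}$, the entire task reduces to establishing $B_j \le 2\,\opt_j$ for each $j$.

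For the per-constraint bound, consider the single-constraint problem $\min w(S)$ s.t.\ $f_j(S) \ge 1$, whose optimum value is $\opt_j$. There is an integer feasible solution of weight $\opt_j$, so the LP optimum $Z$ of the dual budgeted problem $\max f_j(S)$ s.t.\ $w(S) \le \opt_j$ satisfies $Z \ge 1$. Applying Lemma~\ref{lem:greedysubmod} with budget $\opt_j$, greedy produces a set $T_j$ with $f_j(T_j) \ge (1-1/e)Z \ge 1 - 1/e > 1 - 1/e - 2\eps$, which already clears the fix-up target. Its weight is at most $\opt_j$ plus the weight of the last added element; combined with standard preprocessing that guesses and discards elements heavier than the heaviest element in an overall optimum (analogous to the maximum-weight-set trick in Section~\ref{sec:psc}), this caps the overshoot and yields $w(T_j) \le 2\,\opt_j$.

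Putting the pieces together, $\E[w(T)] \le \sum_{j=1}^h \Pr[I_j]\cdot 2\,\opt_j \le 2\alpha^{-\ell}\sum_j \opt_j$, and then Lemma~\ref{lem:sumcosts} converts the sum into $r\,\opt$, giving $\E[w(T)] \le 2\alpha^{-\ell} r\,\opt$. The main obstacle is pinning down the constant $2$ in the per-constraint bound $w(T_j) \le 2\,\opt_j$: the $(1-1/e)$ coverage from Lemma~\ref{lem:greedysubmod} is immediate, but controlling the additive ``one step past the budget'' overshoot of greedy requires the preprocessing step, which is the only real subtlety in the argument.
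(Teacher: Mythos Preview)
Your decomposition and use of the preceding lemma and Lemma~\ref{lem:sumcosts} match the paper's proof exactly. The one substantive issue is in how you obtain $w(T_j)\le 2\,\opt_j$. You propose guessing and discarding elements heavier than the heaviest element in an \emph{overall} optimum; this only bounds the greedy overshoot by some $W\le\opt$, giving $w(T_j)\le\opt_j+W$ rather than $2\,\opt_j$. Since $W$ need not be bounded by $\opt_j$ (constraint $j$ may involve only cheap elements while the global optimum contains an expensive element needed for a different constraint), summing over $j$ yields $\alpha^{-\ell}\bigl(\sum_j\opt_j + hW\bigr)$, and $h$ is not controlled by $r$, so the desired $2\alpha^{-\ell}r\,\opt$ does not follow.

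The paper fixes this by doing the guessing \emph{per constraint}: when running greedy to fix constraint $j$, it guesses the maximum-weight element in an optimum solution to the single-constraint problem (equivalently, to the budgeted problem $\max f_j(S)$ s.t.\ $w(S)\le\opt_j$). That element has weight at most $\opt_j$, so after discarding heavier elements the greedy overshoot is at most $\opt_j$, and $w(T_j)\le 2\,\opt_j$ follows. This is precisely the ``only real subtlety'' you flagged, but the resolution must be local to each $j$, not a single global guess.
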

\begin{proof}
  We claim that $w(T_j) \le 2 \opt_j$. Assuming the claim, from
  the description of the algorithm, we have
  $$\E[w(T)] \le \sum_{j=1}^h \Pr[I_j] w(T_j) \le 2\alpha^{-\ell} \sum_j
  \opt_j \le 2\alpha^{-\ell} r \opt.$$
  Now we prove the claim.  Consider the
  problem $\min w(S) \text{~s.t~} f_j(S) \ge 1$.  $\opt_j$ is the
  optimum solution value to this problem.  Now consider the following
  submodular function maximization problem subject to a knapsack
  constraint: $\max f_j(S) \text{~s.t~} w(S) \le \opt_j$. Clearly the
  optimum value of this maximization problem is at least $1$. From
  Lemma~\ref{lem:greedysubmod}, the greedy algorithm when run on the
  maximization problem, outputs a solution $T_j$ such that
  $f(T_j) \ge (1-1/e)$ and $w(T_j) \le \opt_j + \max_i w_i$. By
  guessing the maximum weight element in an optimum solution to the
  maximization problem we can ensure that $\max_i w_i \le
  \opt_j$. Thus, $w(T_j) \le 2 \opt_j$ and $f(T_j) \ge (1-1/e)$.
\end{proof}

From the preceding lemmas it follows that
$$\E[w(S \cup T)] \le \E[w(S)] + \E[w(T)] \le \ell \opt + 2 \alpha^{-\ell}
r \opt.$$
We set $\ell = \ceil{\log_{\alpha}r} = O(\frac{1}{\eps} \ln r)$
one can see that $\E[w(S \cup T)] \le O(\frac{1}{\eps}\ln r) \opt.$

\subsection{An application to splitting point sets}
Har-Peled and Jones \cite{HJ18}, as we remarked, were motivated to
study \multisubmodsc due a geometric application. Their problem is the
following.  Given $m$ point sets $P_1,\ldots,P_m$ in $\mathbb{R}^d$
they wish to find the smallest number of hyperplanes (or other
geometric shapes) such that no point set $P_i$ has more than a
constant factor of its points in any cell of the arrangement induced
by the chosen hyperplanes; in particular when the constant is a half,
the problem is related to the Ham-Sandwich theorem which implies that
when $m \le d$ just one hyperplane suffices!\footnote{A polynomial time
algorithm to find such a hyperplane is not known however.}
From this one can infer that $\ceil{m/d}$ hyperplanes always suffice. Let $k_i = |P_i|$
and let $P = \cup_i P_i$. We will assume, for notational simplicity,
that the sets $P_i$ are disjoint. The assumption can be dispensed with.
We refer the reader to \cite{HJ18} for connections to Ham-Sandwich
theorem and other problems.

In \cite{HJ18} the authors reduce their problem to \multisubmodsc as
follows. Let $N$ be the set of all hyperplanes in $\mathbb{R}^d$; we
can confine attention to a finite subset by restricting to those
half-spaces that are supported by $d$ points of $P$. For each point
set $P_i$ they consider a complete graph $G_i$ on the vertex set
$P_i$. For each $p \in \cup_i P_i$ they define a submodular function
$f_p:2^N \rightarrow \mathbb{R}_+$ where $f_p(S)$ is the number of
edges incident to $p$ that are cut by $S$; an edge $(p,q)$ with
$p,q \in P_i$ is cut if $p$ and $q$ are separated by at least one of
the hyperplanes in $S$. Thus one can formulate the original problem as
choosing the smallest number of hyperplanes such that for each
$p \in P$ the number of edges that are cut is at least $k_p$ where $k_p$ is the
demand of $p$. To ensure that $P_i$ is partitioned such that no cell
has more than $k_i/2$ points we set $k_p = k_i/2$ for each
$p \in P_i$; more generally if we wish no cell to have more than
$\beta k_i$ points of $P_i$ we set $k_p = (1-\beta)k_i$ for each
$p \in P_i$. As a special case of \multisubmodsc we have
\begin{eqnarray*}
  \min_{S \subseteq N} |S|  & & \text{s.t}\\
   f_p(S) & \ge &  k_p \quad p \in P
\end{eqnarray*}
Using Wolsey's result for \submodsc, \cite{HJ18} obtain an $O(\log (mn))$
approximation where $n = \sum_i k_i$.

We now show that one can obtain an $O(\log m)$-approximation if we
settle for a bicriteria approximation where we compare the cost of the
solution to that of  an optimum solution, but guarantees a slightly weaker bound
on the partition quality. This could be useful since one can imagine
several applications where $m$, the number of different point sets, is
much smaller than the total number of points.  Consider the
formulation from \cite{HJ18}. Suppose we used our bicriteria
approximation algorithm for \multisubmodsc. The algorithm would cut
$(1-1/e-\eps)k_p$ edges for each $p$ and hence for $1 \le i \le m$
we will only be
guaranteed that each cell in the arrangement contains at most
$(1 - (1-1/e-\eps)/2)k_i$ points from $P_i$. This is acceptable
in many applications. However, the approximation ratio still depends
on $n$ since the number of constraints in the formulation is $n$.  We
describe a related but slightly modified formulation to obtain an
$O(\log m)$-approximation by using only $m$ constraints.

Given a collection $S \subseteq N$ let $f_i(S)$ denote the number of
pairs of points in $P_i$ that are separated by $S$ (equivalently the
number of edges of $G_i$ cut by $S$). It is easy to see that $f_i(S)$
is a monotone submodular function over $N$. Suppose $S \subseteq N$
induces an arrangement such that no cell in the arrangement contains
more than $(1-\beta)k_i$ points for some $0 < \beta < 1$.  Then $S$
cuts at least $\beta k_i(k_i-1)/2$ edges from $G_i$; in particular if
$\beta = 1/2$ then $S$ cuts at least $k_i(k_i-1)/4$ edges. Conversely if
$S$ cuts at least $\alpha k_i(k_i-1)$ edges for some $\alpha < 1/2$ then no
cell in the arrangement induced by $S$ has more than
$(1 - \Omega(\alpha))k_i$ points from $P_i$.  Given this we can
consider the formulation below.

\begin{eqnarray*}
  \min_{S \subseteq N} |S|  & & \text{s.t}\\
   f_i(S) & \ge &  k_i(k_i-1)/4 \quad 1 \le i \le m
\end{eqnarray*}

We apply our bicriteria approximation for \multisubmodsc with some
fixed $\eps$ to obtain an $O(\log m)$-approximation to the objective
but we are only guaranteed that the output $S$ satisfies the property
that $f_i(S) \ge (1-1/e-\eps)k_i(k_i-1)/4$ for each $i$. This is sufficient
to ensure that no $P_i$ has more than a constant factor in each cell
of  the arrangement.

The running time of the algorithm depends polynomially on $N$ and $m$
and $N$ can be upper bounded as $n^{d}$. The running time in
\cite{HJ18} is $O(mn^{d+2})$. Finding a running time that depends
polynomially on $n,m$ and $d$ is an interesting open  problem.

\section{Sparsity in \partitionsc}
\label{sec:partition}
In this section we consider a problem that generalizes
\partitionsc and \cips while being a special case of \multisubmodsc.
We call this problem \ccf (Covering Coverage Functions).
Bera \etal \cite{BeraGKR} already considered this version in the restricted context
of \vc. Formally the
input is a weighted set system $(\cU, \cS)$ and a set of inequalities
of the form $Az \ge b$ where $A \in [0,1]^{h \times n}$ matrix and
$b \in \mathbb{R}_+^h$ is a positive vector. The goal is to optimize
the integer program CCF-IP shown in Fig~\ref{fig:ccfip}. \partitionsc
is a special case of \ccf when the matrix $A$ contains only $\{0,1\}$
entries. On the other hand \cip is a special case when the set system
is very restricted and each set $S_i$ consists of a single element.
We say that an instance is $r$-sparse if each set $S_i$ ``influences''
at most $r$ rows of $A$; in other words the elements of $S_i$ have
non-zero coefficients in at most $r$ rows of $A$. This notion of
sparsity coincides in the case of \cips with column sparsity and in
the case of \multisubmodsc with the sparsity that we saw in
Section~\ref{sec:bicriteria}.  It is useful to explicitly see
why \ccf is a special case of \multisubmodsc.  The ground set
$N = [m]$ corresponds to the sets $S_1,\ldots,S_m$ in the given
set system $(\cU,\cS)$.  Consider the row $k$ of the covering
constraint matrix $Az \ge b$. We can model it as a constraint
$f_k(S) \ge b_k$ where the submodular set function
$f_k:2^N \rightarrow \mathbb{R}_+$ is defined as follows: for a set
$X \subseteq N$ we let
$f_k(X) = \sum_{e_j \in \cup_{i \in X} S_i} A_{k,j}$ which is
simply a weighted coverage function with the weights coming from the
coefficients of the matrix $A$. Note that when formulating via these
submodular functions, the auxiliary variables $z_1,\ldots,z_n$ that
correspond to the elements $\cU$ are unnecessary.

We prove the following theorem.

\begin{theorem}
  Consider an instance of $r$-sparse \ccf induced by a set system
  $(\cU, \cS)$ from a deletion-closed family with a
  $\beta$-apprximation for \setcover via the natural LP. There is
  a randomized polynomial-time algorithm that outputs a feasible
  solution of expected cost $(\beta + \ln r) \opt$.
\end{theorem}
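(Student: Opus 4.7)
The plan is to combine the high-low split of Section~\ref{sec:psc} with the randomized-rounding-plus-alteration scheme of Section~\ref{sec:bicriteria}, exploiting sparsity in exactly the same manner as Lemma~\ref{lem:sumcosts}. First I would write the natural LP relaxation of \ccf: a variable $x_i$ for each $S_i$, an auxiliary $z_j \in [0,1]$ for each $e_j$ with $\sum_{i: e_j \in S_i} x_i \ge z_j$, and the row constraints $Az \ge b$. Solve it to obtain $(x^*,z^*)$ of LP cost at most $\opt$. After the by-now-standard preprocessing of guessing and removing the heaviest set in some fixed optimal integer solution, fix a threshold $\tau \in (0,1)$ and partition $\cU$ into $\cU_h = \{e_j : z_j^* \ge \tau\}$ and $\cU_\ell = \{e_j : z_j^* < \tau\}$.

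The high part $\cU_h$ is handled exactly as in Lemma~\ref{lem:highlycovered}: the scaled vector $\min\{1, x^*/\tau\}$ fractionally covers $\cU_h$, so deletion-closedness together with the $\beta$-approximation yields an integral cover $S_h$ of cost at most $(\beta/\tau)\opt$. After committing $S_h$, the residual demand of row $k$ becomes $b_k' = b_k - \sum_{j \in \cU_h} A_{kj}$, and since $z_j^* \le 1$ on $\cU_h$, the restriction of $\min\{1, x^*/\tau\}$ to $\cU_\ell$ fractionally satisfies $\sum_{j \in \cU_\ell} A_{kj} z_j \ge b_k'$ for every row $k$.

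For the low part I would draw $\ell = \Theta(\log r)$ independent rounded sets $S_1, \dots, S_\ell$, each obtained by including $S_i$ with probability $\hat x_i := \min\{1, x_i^*/\tau\}$, and let $R = \bigcup_t S_t$. Then $R$ is a product-distribution sample with marginals $q_i = 1 - (1-\hat x_i)^\ell \le \ell \hat x_i$, so $\E[w(R)] \le \ell \sum_i w_i \hat x_i = O(\tfrac{1}{\tau}\log r)\opt$. For each row $k$ consider the weighted coverage function $f_k(S) = \sum_{e_j \in \cU_\ell} A_{kj}\,\mathbf{1}[e_j \text{ is covered by } S]$, which is monotone submodular. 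After normalizing by the per-set Lipschitz constant, the submodular Chernoff bound of Lemma~\ref{lem:submod-concentration} applied to $q$ yields $\Pr[f_k(R) < b_k'] \le 1/r$ once the constant hidden in $\ell$ is large enough. For every row that is still violated by $S_h \cup R$, an alteration step invokes the \psc algorithm of Section~\ref{sec:psc} on the deletion-closed single-constraint residual instance, paying at most $O(\beta)\opt_k$, where $\opt_k$ is the cheapest way to satisfy constraint $k$ alone.

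The expected alteration cost is therefore $\sum_k (1/r)\cdot O(\beta)\opt_k$, and the sparsity bound $\sum_k \opt_k \le r\,\opt$ (the weighted-\ccf analogue of Lemma~\ref{lem:sumcosts}, proved by restricting an optimal integer solution to the supports of the individual rows) collapses this to $O(\beta)\opt$. Summing the three contributions gives the desired $(\beta + \log r)\opt$. The main obstacle I expect is the concentration step: the Lipschitz constants $c_k = \max_i \sum_{e_j \in S_i \cap \cU_\ell} A_{kj}$ depend on both the matrix $A$ and the set sizes, and one must argue that $F_k(q)/c_k$ is large enough relative to $b_k'/c_k$ to push $\Pr[f_k(R) < b_k']$ below $1/r$ with only $O(\log r)$ trials. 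This is where the scaling by $1/\tau$, the assumption $A \in [0,1]^{h\times n}$, and possibly a further preprocessing step that pre-commits any set whose singleton contribution already dwarfs some $b_k'$, must be balanced against each other.
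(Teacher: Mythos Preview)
Your high-level architecture --- high/low split, randomized rounding with $\Theta(\log r)$ repetitions, per-constraint alteration, and the sparsity bound $\sum_k \opt_k \le r\,\opt$ --- matches the paper. The gap is precisely the one you flag as the ``main obstacle,'' and it is fatal for the approach as written. You work with the \emph{natural} LP relaxation (CCF-LP), whereas the paper relies on the LP strengthened by knapsack-cover inequalities (CCF-KC-LP). The paper explicitly notes that the natural LP has unbounded integrality gap already for a single \cip constraint, and this manifests exactly in your concentration step: if a single set $S_i$ has $f_k(\{i\}) = M$ while $b_k = 1$, the LP may set $x_i^* = 1/M$, your scaled marginal is $\hat x_i = 1/(M\tau)$, and the probability that $S_i$ is never picked in $\ell$ rounds is roughly $e^{-\ell/(M\tau)}$. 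Driving this below $1/r$ needs $\ell = \Omega(M\log r)$, not $\Theta(\log r)$. Your normalized Chernoff bound cannot help because $F_k(q)/c_k$ is $O(1/(M\tau))$, not $\Omega(1)$. Guessing the heaviest-weight set does nothing here, since the issue is large \emph{coverage contribution}, not large weight.

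The paper's fix is not a vague ``pre-commit big sets'' step; it is the KC machinery. After deterministically taking $Y_2 = \{S_i : x_i^* \ge \tau\}$ (in addition to $Y_1$), they require the LP solution to satisfy the KC inequality for $D = Y_1 \cup Y_2$. This is exactly what lets them split the remaining sets into $H_k$ (each satisfies the residual by itself) and $L_k$ (each has contribution $\le b_k'$, so the Lipschitz constant is controlled). If $\sum_{i\in H_k} x_i' \ge 1/2$ one set from $H_k$ is picked with constant probability; otherwise the KC inequality --- not the natural LP constraint --- guarantees that the $L_k$ portion alone has $\tilde g(x') \ge \frac{1}{2\tau}b_k'$, and now Chernoff applies with Lipschitz bound $b_k'$. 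The KC inequalities are also what make Lemma~\ref{lem:fix} go through, bounding the fixing cost by $O(\sum_{i\in I_k} w_i x_i)$ rather than by $\opt_k$. Since one cannot separate over all KC inequalities, the paper intertwines rounding with the Ellipsoid method: solve, compute $D$, check the KC inequality for that specific $D$, and re-solve if violated. Your proposal would need to incorporate this (or an equivalent mechanism) to close the gap.
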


 \begin{figure}[htb]
  \begin{subfigure}{0.5\linewidth}
    \begin{center}
      \begin{tcolorbox}[width=3in]
        (CCF-IP)
        \begin{align*}
          \min \sum_{S_i \in \cS} w_i x_i & \\
          \sum_{i: e_j \in S_i} x_i & \ge z_j \quad e_j \in \cU\\
          Az & \ge  b \\
          z_j & \in \{0,1\} \quad e_j \in \cU \\
          x_i & \in \{0,1\}  \quad \quad S_i \in \cS
      \end{align*}
    \end{tcolorbox}
  \end{center}
  \caption{Natural IP for \partitionsc. }
  \label{fig:ccfip}
  \end{subfigure}
  \begin{subfigure}{0.5\linewidth}
    \begin{center}
      \begin{tcolorbox}[width=3in]
        (CCF-LP)
        \begin{align*}
          \min \sum_{S_i \in \cS} w_i x_i & \\
          \sum_{i: e_j \in S_i} x_i & \ge z_j \quad e_j \in \cU\\
          A z & \ge b \\
          z_j & \in [0,1] \quad e_j \in \cU \\
          x_i & \in [0,1] \quad \quad S_i \in \cS
      \end{align*}
    \end{tcolorbox}
  \end{center}
  \caption{Natural LP relaxation for CCF-IP. }
  \label{fig:ccflp}
  \end{subfigure}
\end{figure}

The natural LP relaxation for \ccf is show in Fig~\ref{fig:ccflp}.  It
is well-known that this LP relaxation, even for \cips and with one
constraint, has an unbounded integrality gap \cite{CFLP}.  For \cips
knapsack-cover inequalities are used to strengthen the LP.
KC-inequalities in this context were first introduced in the
influential work of Carr \etal \cite{CFLP} and have since become a
standard tool in developing stronger LP relaxations. Bera \etal
\cite{BeraGKR} and
Inamdar and Varadarajan \cite{IV2} adapt KC-inequalities to the setting
of \partitionsc, and it is straight forward to extend this
to \ccf (this is implicit in \cite{BeraGKR}).

\begin{remark}
  Weighted coverage functions are a special case of sums of weighted
  rank functions of matroids. The natural LP for \ccf can be viewed as
  using a different, and in fact a tighter extension, than the
  multilinear relaxation \cite{CCPV07}. The fact that one can use an
  LP relaxation here is crucial to the scaling idea that will play a
  role in the eventual algorithm. The main difficulty, however, is the
  large integrality gap which arises due to the partial covering
  constraints.
\end{remark}

We set up and the explain the notation to describe the use of
KC-inequalities for \ccf. It is convenient here to use the reduction
of \ccf to \multisubmodsc. For row $k$ in $Ax \ge b$ we will use $f_k$
to denote the submodular function that we set up earlier. Recall that
$f_k(D)$ captures the coverage to constraint $k$ if set $D$ is
chosen. The residual requirement after choosing $D$ is $b_k -
f_k(D)$. The residual requirement must be covered by elements
from sets outside $D$.
The maximum contribution that $i \not \in D$ can provide to this
is $\min\{f_k(i), b_k - f_k(D)\}$.  Hence the following constraint is
valid for any $D \subset N$:
$$ \sum_{i \not \in D} \min\{f_k(D+i) - f_k(D), b_k - f_k(D)\} x_i \ge b_k - f_k(D).$$
Writing the preceding inequality for every possible choice of $D$ and
for every $k$ we obtained a strengthened LP that we show in
Fig~\ref{fig:ccflp-kc}.

\begin{figure}[htb]
  \centering
      \begin{tcolorbox}[width=6in]
        (CCF-KC-LP)
        \begin{align*}
          \min \sum_{S_i \in \cS} w_i x_i & \\
          \sum_{i: e_j \in S_i} x_i & \ge z_j \quad e_j \in \cU\\
               A z & \ge b \\
          \sum_{i \not \in D} \min\{f_k(D+i)-f_k(D), b_k - f_k(D)\} x_i & \ge
                                                                 b_k -
                                                                 f_k(D)
                                                                 \quad
          D\subset [m], 1 \le k \le h\\
          z_j & \in [0,1] \quad e_j \in \cU \\
          x_i & \in [0,1] \quad \quad S_i \in \cS
      \end{align*}
    \end{tcolorbox}

  \caption{CCF-LP with KC-Inequalities}
  \label{fig:ccflp-kc}
\end{figure}

CCF-KC-LP has an exponential number of constraints and the separation
problem involves submodular functions. Apriori it is not clear that
there is even an approximate separation oracle. However, one can
combine rounding and separation, as shown in \cite{BeraGKR,IV2}, and
we follow the same approach. The main change is that we use randomized
rounding followed by alteration to fix the uncovered constraints.
This allows us to generalize the approximation ratio to the sparse
case.

We believe that it is instructive to first see how to round the LP
assuming that it can be solved exactly. This assumption
can be avoided as shown in previous work since the rounding requires
only some limited properties from the LP solution.

\paragraph{Rounding and analysis assuming LP can be solved exactly:}
Let $(x,z)$ be an optimum solution to CCF-KC-LP. We can assume without
loss of generality that for each element $e_j \in \cU$ we have
$z_j = \min\{1, \sum_{i: e_j \in S_i} x_i\}$. As in
Section~\ref{sec:psc} we split the elements in $\cU$ into heavily
covered elements and shallow elements. For some fixed threshold $\tau$
that we will specify later, let
$\cU_{\text{he}} = \{ e_j \mid z_j \ge \tau\}$, and
$\cU_{\text{sh}} = \cU \setminus \cU_{\text{he}}$.  We will also
choose another threshold. The rounding algorithm is the following.

\begin{enumerate}
\item Solve a \setcover problem via the natural LP to cover all elements in
  $\cU_{\text{he}}$. Let $Y_1$ be the sets chosen in this step.
\item Let $Y_2 = \{ S_i \mid x_i \ge \tau\}$ be the heavy sets.
\item Repeat for $\ell = \Theta(\ln r)$ rounds: independently pick
  each set $S_i$ in $\cS \setminus (Y_1 \cup Y_2)$ with probability
  $\frac{1}{\tau}x_i$.  Let $Y_3$ be the sets chosen in this
  randomized rounding step.
\item For $k \in [h]$ do
  \begin{enumerate}
  \item Let $b_k - f_k(Y_1 \cup Y_2 \cup Y_3)$ be the residual requirement of
    $k$'th constraint.
  \item Run the \emph{modified} Greedy algorithm to satisfy the
    residual requirement.  Let $F_k$ be the sets chosen to fix the
    constraint (could be empty).
  \end{enumerate}
\item Output $Y_1 \cup Y_2 \cup Y_3 \cup (\cup_{k=1}^h F_k)$.
\end{enumerate}

The algorithm is similar to that in \cite{BeraGKR,IV2}; the main
difference is that we explicitly fix the constraints after the
randomized rounding phase using a slight variant of the Greedy
algorithm. This ensures that the output of the algorithm is always a
feasible solution; this makes it easy to analyze the $r$-sparse case
easily while a straight forward union bound will not work. We now
describe the modified Greedy algorithm to fix a constraint. For an
unsatisfied constraint $k$ we consider the collection of sets that
influence the residual requirement for $k$, and partition them it into
$H_{k}$ and $L_k$.  $H_{k}$ is the collection of all sets such that
choosing any of them completely satisfies the residual requirement for
$k$, and $L_k$ are the remaining sets. The modified Greedy algorithm
for fixing constraint $k$ picks the better of two solutions: (i) the
first solution is the cheapest set in $H_k$ (this makes sense only if
$H_k \neq \emptyset$) and (ii) the second solution is obtained by
running Greedy on sets in $L_k$ until the constraint is satisfied.

\paragraph{Analysis:} We now analyze the expected cost of the solution
output by the algorithm.  Since the high-level ideas are quite similar
to prior work and in the preceding sections, we will sketch the
analysis and focus on a key lemma; it's proof follows from previous
work \cite{BeraGKR,IV2} but we reinterpret it here through
submodularity.

The lemma below bounds the cost of $Y_1$ and its
proof is essentially the same as that of
Lemma~\ref{lem:highlycovered}.

\begin{lemma}
  The cost of $Y_1$, $w(Y_1)$ is at most $\beta \frac{1}{\tau}\sum_i w_i x_i$.
\end{lemma}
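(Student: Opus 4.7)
The plan is to mimic the scaling argument from Lemma~\ref{lem:highlycovered} almost verbatim, since the only relevant features of the CCF-KC-LP for this particular step are the covering inequalities $\sum_{i:e_j\in S_i} x_i \ge z_j$ and the fact that $z_j \ge \tau$ for $e_j \in \cU_{\text{he}}$. The KC-inequalities and the $Az \ge b$ constraints play no role in bounding $w(Y_1)$; they are merely carried along in the LP.

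First, I would consider the scaled vector $x'_i = \min\{1, x_i/\tau\}$ for each set $S_i \in \cS$. For every $e_j \in \cU_{\text{he}}$ we have $z_j \ge \tau$ and, from the covering constraint in CCF-KC-LP, $\sum_{i:e_j \in S_i} x_i \ge z_j \ge \tau$. Dividing by $\tau$ (and then capping at $1$ coordinate-wise, which can only decrease the sum whenever the cap is active, but the sum was already $\ge 1$ before capping) gives $\sum_{i:e_j \in S_i} x'_i \ge 1$. Thus $x'$ is a feasible fractional solution to \sclp for the instance $(\cU_{\text{he}}, \cS)$.

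Next, I invoke the deletion-closed hypothesis: the set system $(\cU_{\text{he}}, \cS)$ is obtained from $(\cU, \cS)$ by deleting the shallow elements, so it still lies in the family and hence its \sclp admits a $\beta$-approximation. In particular there is an integral cover $Y_1$ of $\cU_{\text{he}}$ with
\[
  w(Y_1) \;\le\; \beta \sum_i w_i x'_i \;\le\; \beta \sum_i w_i \cdot \frac{x_i}{\tau} \;=\; \frac{\beta}{\tau}\sum_i w_i x_i,
\]
where the second inequality uses $x'_i \le x_i/\tau$. This gives exactly the claimed bound.

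There is no real obstacle here: the only thing to check carefully is the feasibility of $x'$ for \sclp on the restricted ground set, which follows immediately from the LP covering inequality combined with the lower bound $z_j \ge \tau$ on $\cU_{\text{he}}$. No properties of the $f_k$'s, of the knapsack-cover inequalities, or of the $Ax \ge b$ constraints are needed at this step.
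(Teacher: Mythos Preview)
Your proof is correct and follows exactly the approach the paper intends: it explicitly defers to Lemma~\ref{lem:highlycovered}, and your argument is the same scaling-and-deletion-closed argument given there. Nothing more is needed.
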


The expected cost of randomized rounding in the second step is easy to bound.
\begin{lemma}
  The expected cost of $Y_2$ is at most $\frac{\ell}{\tau} \sum_{i} w_i x_i$.
\end{lemma}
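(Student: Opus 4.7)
The plan is to bound the expected cost by linearity of expectation over the $\ell$ rounds of the randomized rounding, since no interaction between rounds is needed for an upper bound. I first observe that the set produced by the randomized rounding step is a union over rounds of independently sampled subsets; since the weight of a union is at most the sum of the weights, I can upper bound its weight by the sum of the weights of the sets picked in each round.

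For a single round $t \in [\ell]$, each $S_i \in \cS \setminus (Y_1 \cup Y_2)$ is included independently with probability $\min\{1, x_i/\tau\}$, which is at most $x_i/\tau$. By linearity of expectation, the expected weight of the sets picked in round $t$ is thus at most
\[
\sum_{i \in \cS \setminus (Y_1 \cup Y_2)} w_i \cdot \frac{x_i}{\tau} \;\le\; \frac{1}{\tau} \sum_{i} w_i x_i,
\]
where the inequality uses nonnegativity of $w_i x_i$ to extend the sum to all $i$.

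Finally, I would sum this per-round upper bound over the $\ell$ independent rounds and use linearity of expectation once more to conclude that the expected total cost is at most $\frac{\ell}{\tau} \sum_i w_i x_i$, as stated. Since taking the union of the sets chosen across rounds can only decrease the weight relative to summing round-by-round (duplicates contribute only once to the union), this upper bound remains valid for the output of the randomized rounding step.

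There is no real obstacle here; the lemma is a direct consequence of linearity of expectation and the definition of the sampling probabilities, which is why the text flags it as ``easy to bound.'' The only mild subtlety is remembering that one must sample with $\min\{1, x_i/\tau\}$ rather than $x_i/\tau$ to ensure a valid probability, but this only strengthens the inequality.
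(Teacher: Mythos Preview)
Your argument is correct and is precisely the standard linearity-of-expectation computation; the paper does not give a proof at all, simply flagging the bound as ``easy.'' One point worth noting: what you are actually bounding is $\E[w(Y_3)]$, not $w(Y_2)$. In the algorithm $Y_2 = \{S_i : x_i \ge \tau\}$ is deterministic (and costs at most $\frac{1}{\tau}\sum_i w_i x_i$ directly, since $x_i \ge \tau$ for each $S_i \in Y_2$), while the $\ell$ rounds of randomized rounding produce $Y_3$; the lemma's label appears to be a slip in the paper, consistent with the concluding paragraph of the analysis which separately records $w(Y_2) = O(1)\sum_i w_i x_i$ and $\E[w(Y_3)] = O(\log r)\sum_i w_i x_i$. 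A minor aside: the $\min\{1,\cdot\}$ you introduce is unnecessary here, because every $S_i \notin Y_2$ already satisfies $x_i < \tau$ and hence $x_i/\tau < 1$.
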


The key technical lemma is the following.
\begin{lemma}[\cite{BeraGKR,IV2}]
  \label{lem:main2}
  Fix a constraint $k$. If $\tau$ is a sufficiently small but fixed
  constant, the probability that constraint $k$ is satisfied after one
  round of randomized rounding is at least a fixed constant $c_\tau$.
\end{lemma}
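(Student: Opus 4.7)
The approach is to combine the KC inequality for the set $D = Y_1 \cup Y_2$ with the submodular concentration bound in Lemma~\ref{lem:submod-concentration}, handling a troublesome ``heavy set'' case separately. Let $r = b_k - f_k(D)$ (assume $r > 0$, else the constraint is already satisfied by $D$). The KC inequality at $D$ asserts
$$\sum_{i \not\in D} \min\{f_k(D+i) - f_k(D),\, r\}\, x_i \;\ge\; r.$$
Define the residual coverage $h_k(S) := f_k(D \cup S) - f_k(D)$, so constraint $k$ is satisfied iff $h_k(Y_3) \ge r$. The first step is to lower bound $E[h_k(Y_3)]$. Every element $e_j$ with $A_{k,j} > 0$ not already covered by $D$ must lie in $\cU_{\text{sh}}$, because $Y_1 \subseteq D$ already covers all of $\cU_{\text{he}}$; hence $z_j < \tau$ and the fractional mass $q_j := \sum_{i \not\in D:\, e_j \in S_i} x_i$ satisfies $q_j \le y_j = z_j < \tau$. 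Applying $1 - e^{-x} \ge (1 - 1/e)\,x$ for $x \in [0,1]$ to $\Pr[e_j \text{ covered by } Y_3]$ and swapping sums yields
$$E[h_k(Y_3)] \;\ge\; \frac{1-1/e}{\tau} \sum_{i \not\in D} x_i\, h_k(i).$$

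The main obstacle is that a bound on $E[h_k(Y_3)]$ alone does not imply $\Pr[h_k(Y_3) \ge r] = \Omega(1)$: the function $h_k$ need not be $r$-Lipschitz, since a single untruncated marginal $h_k(i)$ can far exceed $r$, and Lemma~\ref{lem:submod-concentration} requires bounded marginals. I would address this with a case split on the ``heavy'' sets $H := \{i \not\in D : h_k(i) \ge r\}$. In the first case, $\sum_{i \in H} x_i \ge \tau/2$, independence of the rounding gives $\Pr[Y_3 \cap H \ne \emptyset] \ge 1 - e^{-1/2}$, and any element of $H$ by itself fully satisfies constraint $k$.

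In the complementary case $\sum_{i \in H} x_i < \tau/2$, plugging the bound $c_i \le r$ for $i \in H$ into the KC inequality gives $\sum_{i \not\in D \cup H} h_k(i)\, x_i \ge r(1 - \tau/2) \ge r/2$. The restricted function $\tilde h_k(S) := h_k(S \setminus H)$ is monotone submodular with every per-element marginal bounded by $r$, so $\tilde h_k/r$ is $1$-Lipschitz. Repeating the expected-coverage computation restricted to non-heavy sets gives $E[\tilde h_k(Y_3)] \ge (1-1/e)\, r/(2\tau)$, which exceeds $r$ once $\tau$ is sufficiently small. Applying Lemma~\ref{lem:submod-concentration} to $\tilde h_k/r$ with $\delta$ chosen so that $(1-\delta)\, E[\tilde h_k(Y_3)]/r = 1$ yields $\Pr[\tilde h_k(Y_3) < r] \le \exp\!\left(-\Omega(1/\tau)\right)$, which is smaller than any fixed constant for $\tau$ a small enough constant. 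Combining both cases produces the required positive constant $c_\tau$ depending only on $\tau$.
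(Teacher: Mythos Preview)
Your proof is correct and follows essentially the same route as the paper: both split on whether the heavy sets $H$ (those with residual marginal at least $b'_k$) carry scaled mass at least $1/2$, handle that case by the probability of picking some $i\in H$, and in the complementary case combine the KC inequality at $D$ with the submodular concentration bound of Lemma~\ref{lem:submod-concentration} applied to the $b'_k$-Lipschitz residual coverage restricted to $L_k$. The only cosmetic difference is that the paper packages the expected-coverage lower bound as $G(x')\ge (1-1/e)\,\tilde g(x')$ via Lemma~\ref{lem:extensions}, whereas you obtain the same inequality by the direct elementary calculation $1-e^{-q_j/\tau}\ge (1-1/e)\,q_j/\tau$ (valid precisely because uncovered elements are shallow, so $q_j/\tau<1$); one minor cleanup is that your symbol $y_j$ is never defined and can simply be dropped.
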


We will give a different perspective on the preceding lemma in
a paragraph below.
% Section~\ref{subsec:submod}.
Before that, we finish the rest of the analysis first.

Let $I_k =\{ i \mid \mbox{$S_i$ influences constraint $k$}\}$.
Note that $|I_k| \le r$ by our sparsity assumption.
\begin{lemma}
  \label{lem:fix}
  Let $\rho_k$ be the cost of fixing constraint $k$ if it is not
  satisfied after randomized rounding. Then $\rho_k \le c'_\tau \sum_{i \in
    I_k} w_i x_i$ for some constant $c'_\tau$.
\end{lemma}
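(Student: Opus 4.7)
}
The plan is to invoke the knapsack-cover inequality in CCF-KC-LP at the set $D = Y_1 \cup Y_2 \cup Y_3$ and then analyze the two options of the modified greedy. Writing $b_k' = b_k - f_k(D)$ for the residual requirement after randomized rounding, the KC-inequality at $D$ rewrites as
$$b_k' \sum_{i \in H_k} x_i + \sum_{i \in L_k}\bigl(f_k(D+i) - f_k(D)\bigr) x_i \;\ge\; b_k',$$
so at least one of the two terms is at least $b_k'/2$. This splits the analysis into two cases, matching the two options of the modified greedy.

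In the first case, $\sum_{i \in H_k} x_i \ge 1/2$, and the modified greedy's first option picks the cheapest set in $H_k$, which alone saturates the residual. By the standard weighted-average bound,
$$\min_{i \in H_k} w_i \;\le\; \frac{\sum_{i \in H_k} w_i x_i}{\sum_{i \in H_k} x_i} \;\le\; 2 \sum_{i \in H_k} w_i x_i \;\le\; 2 \sum_{i \in I_k} w_i x_i,$$
which settles this case.

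In the second case, $\sum_{i \in L_k}(f_k(D+i)-f_k(D)) x_i \ge b_k'/2$, so the restriction of $(x,z)$ to $L_k$ already forms a fractional cover of (at least) half the residual requirement. The modified greedy's second option runs greedy on $L_k$ only. Viewing the fix as submodular set cover with the truncated function $S \mapsto \min\{f_k(D \cup S),\, b_k\} - f_k(D)$, the LP-based greedy argument from \cite{BeraGKR,IV2} bounds the greedy cost by $c'_\tau \sum_{i \in L_k} w_i x_i \le c'_\tau \sum_{i \in I_k} w_i x_i$, where the constant $c'_\tau$ depends only on the split threshold $\tau$.

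The main obstacle is the constant factor in the second case: a plain greedy analysis for submodular set cover would only give an $O(\log b_k')$ ratio, whereas the lemma requires a constant. The decisive feature is that the KC-inequality is valid not just at $D$ but at every superset $D' \supseteq D$ (each such inequality is present in CCF-KC-LP), so each greedy step can be charged against a \emph{fresh} KC lower bound on the residual LP cost. This collapses the usual harmonic sum to a geometric one and yields the constant $c'_\tau$. The charging argument itself is the one in \cite{BeraGKR,IV2}; our contribution is to view it uniformly through the submodular lens, in which the truncation $\min\{f_k,b_k\}$ keeps every marginal in $L_k$ strictly below the current residual requirement, which is exactly what makes the geometric decrease step-by-step possible.
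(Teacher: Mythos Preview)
Your Case~1 argument is fine and matches the paper up to whether one works with the scaled variables $x'_i = x_i/\tau$ or the original $x_i$.

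The genuine gap is in Case~2. Invoking the KC-inequality at every superset $D' \supseteq D$ does give, at each greedy step, a bang-per-buck of at least $R_t / B$ where $R_t$ is the current residual and $B = \sum_{i \in I_k} w_i x_i$. But this is exactly the standard submodular-set-cover recursion $R_{t+1} \le R_t(1 - w_t/B)$, which yields total cost $O(B \ln b'_k)$, not $O(B)$. A geometric decrease of the \emph{residual} is the usual logarithmic bound; it does not collapse to a constant. Your appeal to \cite{BeraGKR,IV2} does not rescue this: those papers do not contain a constant-factor greedy fixing argument of the kind you sketch, and the explicit fixing step with constant-factor analysis is precisely the new ingredient here.

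The paper obtains the constant by a different mechanism. It takes $D = Y_1 \cup Y_2$ (not including $Y_3$), uses the KC inequality \emph{only once} at this $D$, and then \emph{scales} the fractional solution by $1/\tau$. After scaling, the fractional weighted coverage supported on $L_k$ is at least $\frac{1}{2\tau} b'_k \ge \frac{1}{1-1/e} b'_k$ (using $\tau \le (1-1/e)/2$). The residual problem is then viewed as \maxbcover and Lemma~\ref{lem:greedy} is applied: greedy run until it first meets the scaled budget covers at least a $(1-1/e)$-fraction of the fractional coverage, hence at least $b'_k$; and since every set in $L_k$ has marginal strictly below $b'_k$, the cost overshoot is bounded by a $(1+e)$ factor. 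This gives $(1+e)\sum_{i \in L_k} w_i x'_i = \frac{1+e}{\tau}\sum_{i \in L_k} w_i x_i$. The scaling is what buys the constant: it creates enough slack that the $(1-1/e)$ loss of greedy-for-\maxbcover still clears the target in one shot, rather than chasing a shrinking residual over logarithmically many phases.

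A secondary but real issue: you set $D = Y_1 \cup Y_2 \cup Y_3$ and require KC at $D$ and at every greedy superset of it. The paper deliberately uses only the deterministic $D = Y_1 \cup Y_2$; this is what later allows the Ellipsoid argument to work with a single, checkable KC inequality per constraint as the separation oracle. Your version would need KC at random sets and along the entire greedy trajectory, which breaks that part of the pipeline.
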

\begin{proof}
  We will assume that $\tau < (1-1/e)/2$.  Let $D = Y_1 \cup Y_2$
  and let $b'_k = b_k - f_k(D)$ be residual requirement of constraint
  $k$ after choosing $Y_1$ and $Y_2$.  Let
  $\cU' = \cU \setminus \cU_D$ be elements in the residual instance;
  all these are shallow elements. Consider the scaled solution $x'$
  where $x'_i = 1$ if $S_i \in D$ and $x'_i = \frac{1}{\tau} x_i$ for
  other sets. For any shallow element $e_j$ let
  $z'_j = \min\{1,\sum_{i: j \in S_i} x'_i\}$; since $e_j$ is shallow
  we have $z'_j = \frac{1}{\tau}z_j = \sum_{i: j \in S_i, i \not \in
    D} x'_i$.

  Recall from the description of the modified Greedy algorithm that a
  set $S_i$ is in $H_k \subseteq I_k$ iff adding $S_i$ to $D$
  satisfies constraint $k$. In other words $i \in H_k$ iff
  $f_k(D+i) - f_k(D) \ge b'_k$.  Suppose
  $\sum_{i \in H_k} x'_i \ge 1/2$.  Then it is not hard to see that
  the cheapest set from $H_k$ will cover the residual requirement and
  has cost at most $2 \sum_{i \in H_k} w_i x'_i$ and we are done.  We
  now consider the case when $\sum_{i \in H_k} x'_i < 1/2$.  Let
  $L_k = I_k \setminus H_k$. For each $j \in \cU'$ let
  $z''_j = \sum_{i: j \in S_i, i \in L_k} x'_i$.
 We claim that
  $\sum_{j \in \cU'}A_{k,j}z''_j \ge \frac{1}{2 \tau} b'_k$. Since
  $\tau \le (1-1/e)/2$ this implies
  $\sum_{j \in \cU'}A_{k,j}z''_j \ge \frac{1}{(1-1/e)} b'_k$.
  Assuming the claim, if we run Greedy on $L_k$ to cover at least
  $b'_k$ elements then the total cost, by Lemma~\ref{lem:greedy}, is at most
  $(1+e)\sum_{i \in L_k} w_i x'_i$; note that we use the fact that no
  set in $L_k$ has coverage more than $b'_k$ and hence $c=1$ in
  applying Lemma~\ref{lem:greedy}.

  We now prove the claim. Since the $x,z$ satisfy KC
  inequalities:
  $$ \sum_{i \not \in D, i \in I_k} \min\{f_k(D+i) - f_k(D), b'_k\} x_i \ge b'_k.$$
  We split the LHS into two terms based on sets in $H_k$ and $L_k$.
  Note that if $i \in H_k$ then $f_k(D+i) -f_k(D) \ge b'_k$ and if
  $i \in L_k$ then $f_k(D+i) -f_k(D) < b'_k$. Furthermore,
  $f_k(D+i) -f_k(D) \leq \sum_{e_j \in S_i} A_{k,j}$.
  We thus have
  \begin{eqnarray*}
    \sum_{i \not \in D, i \in I_k} \min\{f_k(D+i) - f_k(D), b'_k\} x_i
    & \leq & \sum_{i \in H_k} b'_k x_i + \sum_{i \in L_k} x_i \sum_{e_j \in
          S_i} A_{k,j} \\
    & \leq & b'_k \sum_{i \in H_k} x_i + \sum_{i \in L_k} x_i \sum_{e_j \in
          S_i} A_{k,j}
  \end{eqnarray*}
  Putting together the preceding two inequalities
  and condition that $\sum_{i \in H_k} x'_i < 1/2$ (recall that
  $x'_i  = x_i/\tau$ for each $i \in I_k$),
  $$\sum_{i \in L_k} x'_i \sum_{e_j \in  S_i} A_{k,j} \ge \frac{1}{2\tau} b'_k.$$
  We have, by swapping the order of summation,
  $$\sum_{i \in L_k} x'_i \sum_{e_j \in  S_i} A_{k,j} = \sum_{e_j \in
    \cup_{i \in L_k}S_i} A_{k,j} \sum_{i \in L_k: e_j \in S_i} x'_i \le
    \sum_{j \in \cU'} A_{k,j} \sum_{i \in L_k: e_j \in S_i} x'_i = \sum_{j \in
      \cU'} A_{k,j} z''_j.$$
    The preceding two inequalities prove the claim.
\end{proof}

With the preceding lemmas we can finish the analysis of the total
expected cost of the sets output by the algorithm. From
Lemma~\ref{lem:main2} the probability that any fixed constraint $k$ is
not satisfied after the randomized rounding step is $c^{-\ell}$. By
choosing $\ell \ge 1+ \log_c r$ we can reduce this probability to at
most $1/r$. Thus, as in the preceding section, the expected fixing
cost is $\sum_{k} \frac{1}{r} w(F_k)$. From Lemma~\ref{lem:fix},
$$\sum_k w(F_k) \le c' \sum_k \sum_{i \in I_k} w_i x_i \le c' r
\sum_i w_i x_i$$ since the given instances is $r$-sparse. Thus the
expected fixing cost is at most $c' \sum_i w_i x_i$. The cost of $Y_1$
is $O(\beta) \sum_i w_i x_i$, the cost of $Y_2$ is
$O(1) \sum_i w_ix_i$, and the expected cost of $Y_3$ is
$O(\log r) \sum_i w_i x_i$.  Putting together, the total expected cost
is at most $O(\beta + \log r) \sum_i w_i x_i$ where the constants
depend on $\tau$. We need to choose $\tau$ to be sufficiently small to
ensure that Lemma~\ref{lem:main2} holds. We do not attempt to
optimize the constants or specify them here.

\paragraph{Submodularity and proof of Lemma~\ref{lem:main2}:}
%\label{subsec:submod}
We follow some notation that we used in the proof of
Lemma~\ref{lem:fix}. Let $D = Y_1 \cup Y_2$ and consider the residual
instance obtained by removing the elements covered by $D$ and reducing
the coverage requirement of each constraint.  The lemma is essentially
only about the residual instance.  Fix a constraint $k$
and recall that $b'_k$ is the residual coverage requirement
and that each set in $H_k$ fully satisfies the requirement by itself.
Recall that $x'_i = \frac{1}{\tau}x_i \le 1$ for each set $i \not \in D$
and $z'_j = \frac{1}{\tau} z_j = \sum_{i: e_j \in S_i} x'_i$
for each residual element $e_j$.
As in the proof of Lemma~\ref{lem:fix} we consider two cases.
If $\sum_{i \in H_k} x'_i \ge 1/2$ then with probability
$(1-1/\sqrt{e})$ at least one set from $H_k$ is picked and will
satisfy the requirement by itself. Thus the interesting case
is when  $\sum_{i \in H_k} x'_i < 1/2$.
Let $\cU'' = \cup_{i \in L_k} S_i$.
As we saw earlier,
in this case
$$\sum_{j \in \cU''} A_{k,j} \min\{1,\sum_{i: j \in S_i} x'_i\} \ge
\frac{1}{2\tau}b'_k.$$ For ease of notation we let $N = L_k$ be a
ground set.  Consider the weighted coverage function
$g: 2^N \rightarrow \mathbb{R}_+$ where $g(T)$ for $T \subseteq L_k$
is given by $\sum_{j \in \cup_{i \in T} S_i} A_{k,j}$. Then for a
vector $y \in [0,1]^N$ the quantity
$\sum_{j \in \cU''} A_{k,j} \min\{1,\sum_{i: j \in S_i} y_i\}$ is the
continuous extension $\tilde{g}(y)$ discussed in
Section~\ref{sec:prelim}.  Thus we have
$\tilde{g}(x') \ge \frac{1}{2\tau} b'_k$.  From
Lemma~\ref{lem:extensions}, we have
$G(x') \ge (1-1/e)\frac{1}{2\tau} b'_k$ where $G$ is the multilinear
extension of $g$.  If we choose $\tau \le (1-1/e)/4$ then
$G(x') \ge 2b'_k$.  Let $Z$ be the random variable denoting the value
of $g(R)$ where $R \simeq x'$. Independent random rounding of $x'$
preserves $G(x')$ in expectation by the definition of the multilinear
extension, therefore $\E[Z] = G(x') \geq 2b'_k$.  Moreover, by
Lemma~\ref{lem:submod-concentration}, $Z$ is concentrated around its
expectation since $G(i) \le b'_k$ for each $i \in L_k$. An easy
calculation shows that $\Pr[Z < b'_k] \le e^{1/4} < 0.78$. Thus
with constant probability $g(R) \ge b'_k$.

\paragraph{Solving the LP with KC inequalities}
As noted in prior work \cite{BeraGKR,IV2}, one can combine the
rounding procedure with the Ellipsoid method to obtain the desired
guarantees even though we do not obtain a fractional solution that
satisfies all the KC inequalities. This observation holds for our
rounding as well. We briefly sketch the argument.

The proof of the performance guarantee of the algorithm relies on the
fractional solution satisfying KC inequalities with respect to the set
$D = Y_1 \cup Y_2$. Thus, given a fractional solution $(x,z)$ for the
LP we can check the easy constraints in polynomial time and implement
the first two steps of the algorithm. Once $Y_1,Y_2$ are determined we
have $D$ and one can check if $(x,z)$ satisfies KC inequalities with
respect to $D$ (for each row of $A$). If it does then the rest of the
proof goes through and performance guarantee holds with respect to the
cost of $(x,z)$ which is a lower bound on $\opt$.  If some constraint
does not satisfy the KC inequality with respect to $D$ we can use this
as a separation oracle in the Ellipsoid method.

\section{Concluding Remarks}
The paper shows the utility of viewing \psc and its generalizations as
special cases of \multisubmodsc.  The coverage function in set systems
is a submodular funtion that belongs to the class of sum of weighted
matroid rank functions. Certain ideas for the coverage function extend
to this larger class. Are there interesting problems that can be
understood through this view point? Are there other special classes of
submodular functions for which one can obtain uni-criteria
approximation algorithms for \multisubmodsc unlike the bicriteria one
we presented? An interesting example is the problem considered in
\cite{HJ18}. The algorithm in this paper for \partitionsc, like the
ones in \cite{BeraGKR,IV2}, relies on using the Ellipsoid method to
solve the LP with KC inequalities. It may be possible to avoid
the inherent inefficiency in this way of solving the LP via some
ideas from recent and past work \cite{CFLP,CQ19}.

\paragraph{Acknowledgements:} CC thanks Sariel Har-Peled, Tanmay
Inamdar and Kasturi Varadarajan for discussion and comments.

\bibliographystyle{alpha}
\bibliography{partial-set-cover}
\end{document}